\numberwithin{equation}{section} 
\newlength{\bredde}
\def\slash#1{\settowidth{\bredde}{$#1$}\ifmmode\,\raisebox{.15ex}{/}
\hspace*{-\bredde} #1\else$\,\raisebox{.15ex}{/}\hspace*{-\bredde} #1$\fi}
\DeclarePairedDelimiter{\absolute}{\lvert}{\rvert} 
\renewcommand{\epsilon}{\varepsilon} 
\renewcommand{\subset}{\subseteq} 
\let\Re\relax\DeclareMathOperator{\Re}{Re} 
\newcommand*{\iunit}{\mathrm{i}} 
\newcommand*{\Nset}{\mathds{N}} 
\newcommand*{\Zset}{\mathds{Z}} 
\newcommand*{\Rset}{\mathds{R}} 
\newcommand*{\Cset}{\mathds{C}} 
\newcommand*{\GL}{\mathrm{GL}} 
\newcommand*{\detij}{\underset{1\leq i,j\leq N}{\det}}
\newcommand*{\detjk}{\underset{1\leq j,k\leq N}{\det}}
\newextarrow{\xrightrightarrows}{{5}{8}{0}{0}}
{\bigRelbar\bigRelbar{\bigtwoarrowsleft\rightarrow\rightarrow}}
\DeclareMathOperator{\Trace}{Tr} 
\DeclareMathOperator{\Pf}{Pf} 
\newcommand*{\abs}[1]{\absolute*{#1}} 
\declaretheorem[numberwithin=section]{proposition}
\declaretheorem[numberlike=proposition]{theorem}
\declaretheorem[numberlike=proposition]{lemma}
\declaretheorem[numberlike=proposition]{corollary}
\newcommand*{\doublerightarrow}[2]{\mathrel{
  \settowidth{\@tempdima}{$\scriptstyle#1$}
  \settowidth{\@tempdimb}{$\scriptstyle#2$}
  \ifdim\@tempdimb>\@tempdima \@tempdima=\@tempdimb\fi
  \mathop{\vcenter{
    \offinterlineskip\ialign{\hbox to\dimexpr\@tempdima+1em{##}\cr
    \rightarrowfill\cr\noalign{\kern.5ex}
    \rightarrowfill\cr}}}\limits^{\!#1}_{\!#2}}}
\begin{document}
\title{Generalised unitary group integrals of Ingham-Siegel and Fisher-Hartwig type
}
\author{{\sc Gernot Akemann$^1$, Noah Ayg\"un$^1$, and Tim R. W\"urfel$^2$}\\~\\
$^1$Faculty of Physics, Bielefeld University, PO-Box 100131, D-33501 Bielefeld, Germany\\
$^2$Department of Mathematics, King's College London, London WC2R 2LS, UK
}

\date{}

\maketitle

\begin{abstract}

We generalise well-known integrals of Ingham-Siegel and Fisher-Hartwig type over the unitary group $U(N)$ with respect to Haar measure, for finite $N$ and including fixed external matrices. When depending only on the eigenvalues of the unitary matrix, they can be related to a Toeplitz determinant with jump singularities. After introducing fixed deterministic matrices as external sources, the integrals can no longer be solved using Andr\'ei\'ef's integration formula. Resorting to the character expansion as put forward by Balantekin, we derive explicit determinantal formulae containing Kummer's confluent and Gau{\ss}' hypergeometric function. They depend only on the eigenvalues of the deterministic matrices and are analytic in the parameters of the jump singularities.
Furthermore, unitary two-matrix integrals of the same type are proposed and solved in the same manner. When making part of the deterministic matrices random and integrating over them, we obtain similar formulae  in terms of Pfaffian determinants. This is  reminiscent to a unitary  group integral found recently by Kanazawa and Kieburg.

\end{abstract}

\section{Introduction and Main Results}\label{intro}

Integrals over the unitary group $U(N)$ with respect to Haar measure appear in many different areas in physics and mathematics. Examples include two-dimensional lattice gauge theory, the Gross--Witten--Wadia model \cite{GW,Wadia}, the epsilon-regime of chiral perturbation theory \cite{GL}, or  the Ising model via Toeplitz determinants, see \cite{DIK} for a recent mathematical review. In mathematics they appear in combinatorics in the longest monotone subsequence of random permutations \cite{TW} and in number theory in the statistics of the non-trivial zeros of the Riemann zeta-function \cite{KS}. 
In physics terminology, such group integrals correspond to the partition function.
Therefore, a detailed knowledge of such integrals is very useful at finite-$N$, in order to study the scaling behaviour as $N\to\infty$, the analyticity at fixed $N$ and to study phase transitions, see \cite{GW,Wadia,Miguel}.
Because the matrix elements of $U\in U(N)$ are randomly distributed, such group integrals can be also viewed as random matrix ensembles, as introduced by Dyson under the name of circular ensembles \cite{DysonI}. 

In the case when the integrand of unitary group integrals only depends on the eigenvalues $t_j$, $j=1,\ldots N$, of $U\in U(N)$ and factorises, the Andr\'ei\'ef integration formula \cite{An} can be applied to map the group integral to a determinant of a Toeplitz matrix. This follows from the Jacobian of the diagonalisation in terms of the modulus square of the Vandermonde determinant of the eigenvalues \cite{DysonI}. This Toeplitz matrix is given by a single integral, with the symbol containing the initial integrand. 

In many applications, however, the integrand may also depend on a fixed deterministic matrix, and thus on the eigenvectors of $U\in U(N)$. Such fixed matrices can parametrise  relevant physical information, e.g. different quark masses in the chiral Lagrangian in the Leutwyler-Smilga integral \cite{LS} or the effect of temperature in terms of Matsubara frequencies in the corresponding random matrix model \cite{Andy,Tilo}. 
In this case different techniques have to be applied. 
A powerful method has been proposed by Balantekin \cite{Baha}, to use an expansion in terms of characters of irreducible representations of the unitary group (or the general linear group $\GL(N,\mathbb{C})$), and to exploit their orthogonality, together with the Cauchy--Binet identity. Examples for such group integrals are the celebrated Harish-Chandra--Itzykson-Zuber integral 
\cite{HC,IZ}
and the Berezin-Karpelevich integral \cite{BK}, see \cite{SW03} for their most general form using the character expansion, which has inspired us.
The character expansion has been generalised to the orthogonal and symplectic group \cite{Baha2} and to unitary supergroups \cite{LWGW}.

Although depending on fixed external matrices, taken from $\GL(N,\mathbb{C})$ in general, also these integrals have integrands that are given in terms of invariants, that is in terms of determinants and (typically exponentials of) traces, or both. In this work we will present several  such unitary group integrals where only partial results were known in the literature, \cite{Yan,FK1,FK2}, see also \cite{KKS15} for a similar integral with constrained matrices.
We will generalise known integrals of Fisher-Hartwig type as they appear in the study of the Ising model and we propose a unitary version of the Ingham-Siegel integral, to contain external matrices in both cases. Without external matrices the latter integral has previously appeared in the study of the longest monotone subsequence of random partitions \cite{TW}. 

Most unitary group integrals mentioned so far take a determinantal form. 
Difficulties arise, however, when higher powers of matrices or traces arise, e.g. $(\Trace[U])^2$ or $\Trace[U^2]$,  as they appear when adding Wilson terms, describing the effect of finite-lattice spacing in the chiral Lagrangian \cite{DSV}. They can be dealt with using a scalar \cite{ADSV}, respectively matrix-valued Hubbard-Stratonovic transformation \cite{KK18}. In the latter case this leads to a closed form with a Pfaffian determinant. Based on the generalised two-matrix Ingham-Siegel and Fisher-Hartwig integrals that we find below, we can construct further examples for unitary group integrals with a Pfaffian structure. They are obtained when part of the fixed external matrices become random, too, and are integrated over.

In all the following we denote by $dU_N$ the Haar measure of the unitary group $U(N)$ and use the normalisation $\int dU_N=1$. The Vandermonde determinant that appears frequently is defined as 
\begin{equation}
\Delta_N( t ) := \Delta_N(t_1, \cdots t_N) = \det\left[ t_j^{N-k}\right]_{j,k=1}^N = 
(-1)^{\frac{N(N-1)}{2}}\det\left[ t_j^{k-1}\right]_{j,k=1}^N =
\prod_{1 \leq j < k \leq N} \left( t_j -t_k \right) . 
\label{Vdet}
\end{equation}
Notice that different conventions exist, without the additional sign in the middle formula, when swapping the order in each factor under the product. For completeness, we recall the definition of two standard special functions, the  confluent Kummer hypergeometric function
\begin{equation}
{}_1F_1 \left(a ; b ; z \right) := 
 \sum_{m=0}^{\infty}\frac{\Gamma \left( a +m \right) \Gamma \left( b \right)}{\Gamma \left( a \right) \Gamma \left( b+m \right)} \frac{z^{m}}{m!} ,
  \label{1F1DEF}
\end{equation}
and Gau{\ss} hypergeometric function 
\begin{equation}
 _2F_1(a,b;c;z):= 
 \sum_{m=0}^\infty \frac{\Gamma \left( c \right) \Gamma \left(a +m \right)\Gamma \left( b+m \right)}{\Gamma \left( a \right)\Gamma \left( b \right)\Gamma \left(c+m \right) } \frac{z^m}{m!}. 
\label{2F1Def}
\end{equation}
Both are defined when $b$, respectively $c$ are not non-positive integers. 
The series for ${}_1F_1(a ; b ; z )$ converges everywhere for $z\in\mathbb{C}$, is analytic in $a$ and $z$, and meromorphic in $b$. For $b>0$ fixed as we will encounter below, 
${}_1F_1(a;b;z)$ is thus analytic.
The series for ${}_2F_1(a,b;c;z)$ converges inside the unit disc, $|z|<1$, and converges absolutely on the unit circle $|z|=1$ for $\Re(c-a-b)>0$. The principal branch of 
${}_2F_1(a,b;c;z)/\Gamma(c)$ (with respect to $c$) is 
analytic in $a,b,c$ for $z\setminus\{0,1,\infty\}$. 
Because below we will have $c>0$ fixed,  ${}_2F_1(a,b;c;z)$ is analytic in the remaining variables. 
For more details we refer to \cite{NIST}. 

The matrix norm used below for matrices $A\in{\rm GL}(N,\mathbb{C})$ is defined as usual by the vector norm as $||A||:=\sup\{||Ax||: x\in\mathbb{C}^N \mbox{with}\ ||x||=1\}$.

\begin{theorem}[Unitary Ingham-Siegel Integrals]
\label{Thm-IS}
Let $A,B,C,D\in{\rm GL}(N,\mathbb{C})$ be fixed complex, invertible $N\times N$ matrices and denote by $\mu_j^2$ and $\nu_j^2$, $j=1,\ldots,N$, the eigenvalues of $AD$ and $BC$, respectively. 
Without loss of generality we can assume that $||C||,||D||<1$, and let $\alpha\in\mathbb{C}$ be fixed.
Then, the following two integrals are well defined and we have the following identities:
\begin{eqnarray}
Z_\alpha^{\rm IS}( A, D ) &:=& \int  dU_N \det \left[1+ U^\dagger D\right]^\alpha \exp \left[ \Trace (AU) \right]
\nonumber\\
&=& 
\frac{ \detjk\left[ _1F_1 (-\alpha ; N-j+1 ; - \mu^2_k)\, \mu_k^{2(N-j)} \right]
}{\Delta_N(\mu^2)},
\label{1IS}
\end{eqnarray}
and 
\begin{eqnarray}
Z_\alpha^{\rm IS}( A, B,C,D ) &:=& \int  dU_N\!\int dV_N \det \left[1+ V^\dag CU^\dagger D\right]^\alpha \exp \left[ \Trace (AUBV) \right]
\nonumber\\
&=& g_N(\alpha)
\frac{ \detjk\Big[ {}_1F_1 (-\alpha -N+1 ; 1; -\mu_j^2\nu_k^2)\Big]}{\Delta_N(\mu^2)\Delta_N(\nu^2)},
\label{2IS}
\end{eqnarray}
with $g_N(\alpha):=\prod_{j=1}^{N}\frac{(j-1)!^2\Gamma(\alpha+N-j+1)}{\Gamma(\alpha+N)}$.
\end{theorem}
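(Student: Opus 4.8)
The plan is to apply Balantekin's character expansion, writing each factor of the integrand as a sum over Schur functions $s_\lambda$ indexed by partitions $\lambda=(\lambda_1\ge\dots\ge\lambda_N\ge0)$ and then using orthogonality on the group(s). I abbreviate by $H_\lambda$ the product of hook lengths, by $d_\lambda$ the dimension of the irreducible representation $\pi_\lambda$ (so that $s_\lambda(M)=\Trace\pi_\lambda(M)$), and by $\ell_j:=\lambda_j+N-j$ the shifted parts. The two expansions I need are
\[ \exp\bk{\Trace M}=\sum_\lambda \frac{s_\lambda(M)}{H_\lambda},\qquad \det\bk{1+M}^\alpha=\sum_\lambda \frac{\prod_{(i,j)\in\lambda}(\alpha+i-j)}{H_\lambda}\,s_\lambda(M), \]
the second being the Schur expansion of $\prod_i(1+m_i)^\alpha$ in the eigenvalues $m_i$ of $M$. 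The hypothesis $\norm{C},\norm{D}<1$ ensures that the arguments $U^\dagger D$ and $V^\dagger CU^\dagger D$ have spectral radius below one, so that the determinant series converge, the principal branch of the power is unambiguous, and termwise integration against the (finite) Haar measure is legitimate.

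Inserting both expansions into $Z^{\rm IS}_\alpha(A,D)$ and using the standard orthogonality relation $\int dU_N\,s_\lambda(AU)\,s_\mu(U^\dagger D)=\delta_{\lambda\mu}\,s_\lambda(AD)/d_\lambda$ collapses the double sum to a single one,
\[ Z^{\rm IS}_\alpha(A,D)=\sum_\lambda \frac{\prod_{(i,j)\in\lambda}(\alpha+i-j)}{H_\lambda^2\,d_\lambda}\,s_\lambda(\mu^2). \]
For the two-matrix integral I integrate first over $V$ by the same relation, producing $s_\lambda(AUBCU^\dagger D)/d_\lambda$, and then over $U$ by Schur's lemma, $\int dU_N\,\pi_\lambda(U)X\pi_\lambda(U)^\dagger=d_\lambda^{-1}\Trace\bk{X}\,\mathds{1}$, which gives $\int dU_N\,s_\lambda(AUBCU^\dagger D)=s_\lambda(AD)s_\lambda(BC)/d_\lambda$. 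Hence
\[ Z^{\rm IS}_\alpha(A,B,C,D)=\sum_\lambda \frac{\prod_{(i,j)\in\lambda}(\alpha+i-j)}{H_\lambda^2\,d_\lambda^2}\,s_\lambda(\mu^2)\,s_\lambda(\nu^2). \]

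The core of the proof is to recognise these partition sums as determinants. Expressing every $\lambda$-dependent factor through the shifted parts via $H_\lambda=\prod_j \ell_j!/\Delta_N(\ell)$, the Weyl dimension formula $d_\lambda=\Delta_N(\ell)/G$ with $G:=\prod_{m=1}^{N-1}m!$ (so that $H_\lambda d_\lambda=\prod_j\ell_j!/G$), and the telescoping content product $\prod_{(i,j)\in\lambda}(\alpha+i-j)=\prod_i\Gamma(\alpha+i)/\prod_i\Gamma(\alpha+N-\ell_i)$, the coefficients factorise over $j$ up to Vandermonde factors. For the two-matrix sum the hook lengths cancel completely, leaving $\prod_i\Gamma(\alpha+i)\,G^2/\prod_j[\Gamma(\alpha+N-\ell_j)(\ell_j!)^2]$. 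Substituting Weyl's bialternant $s_\lambda(\mu^2)=\det[(\mu_k^2)^{\ell_j}]_{j,k}/\Delta_N(\mu^2)$ (and likewise for $\nu^2$), the summand over $\ell_1>\dots>\ell_N\ge0$ is symmetric, so the ordered sum extends to an unrestricted one and the Andr\'ei\'ef (Cauchy--Binet) identity converts the product of the two bialternants into a single determinant with $(j,k)$ entry $\sum_{\ell\ge0}(\mu_j^2\nu_k^2)^\ell/[\Gamma(\alpha+N-\ell)(\ell!)^2]$. Rewriting $1/\Gamma(\alpha+N-\ell)$ as a rising factorial identifies this entry with $\Gamma(\alpha+N)^{-1}\,{}_1F_1(-\alpha-N+1;1;-\mu_j^2\nu_k^2)$, and collecting the global constant reproduces $g_N(\alpha)$, proving \eqref{2IS}.

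The one-matrix integral \eqref{1IS} follows the same route, but now only one factor of $H_\lambda$ cancels against $d_\lambda$, so a residual $\Delta_N(\ell)$ survives in the coefficient. Writing $\Delta_N(\ell)=\det[\ell_k^{N-j}]_{j,k}$ and applying Andr\'ei\'ef yields a determinant whose rows carry the polynomial weights $\ell^{N-j}$; passing to the falling-factorial basis and resumming gives entries proportional to ${}_1F_1(1-\alpha-j;N-j+1;-\mu_k^2)\,\mu_k^{2(N-j)}$, whose first parameter still depends on the row $j$. The final step is to clear this row dependence: the contiguous relation ${}_1F_1(a-1;b;z)={}_1F_1(a;b;z)-(z/b)\,{}_1F_1(a;b+1;z)$ lets one rewrite each row as the target row plus a combination of the rows above it, a unit-lower-triangular row operation that leaves the determinant unchanged and brings every entry to the uniform parameter $-\alpha$. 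I expect this reconciliation between the hook-content bookkeeping and the hypergeometric contiguous relations to be the main obstacle; the remaining manipulations are routine symmetric-function and Andr\'ei\'ef calculus, and analyticity in the jump parameter $\alpha$ is inherited from the entire dependence of ${}_1F_1(\,\cdot\,;b;z)$ on its first argument at the fixed positive-integer values of $b$, in agreement with the remarks preceding the theorem.
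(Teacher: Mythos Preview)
Your approach is essentially the same as the paper's: character expansion of both factors, Schur orthogonality to collapse to a single partition sum, rewriting of the coefficients via hook and dimension formulae, and then Cauchy--Binet to obtain a determinant whose entries are hypergeometric series. For the two-matrix integral \eqref{2IS} your derivation matches the paper's almost line by line.

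The only substantive difference is in how you reach the one-matrix formula \eqref{1IS}. After cancelling $H_\lambda d_\lambda$ you are left with a bare Vandermonde $\Delta_N(\ell)$, which you write as $\det[\ell_k^{\,N-j}]$ and pair with the bialternant; this produces the alternative form with row-dependent first parameter ${}_1F_1(-\alpha-j+1;N-j+1;-\mu_k^2)$, which is exactly the paper's \eqref{1ISalt}. You then propose to reduce to \eqref{1IS} via the contiguous relation ${}_1F_1(a-1;b;z)={}_1F_1(a;b;z)-(z/b){}_1F_1(a;b+1;z)$ and unit-triangular row operations. That argument is correct (and the paper explicitly remarks that such a reduction should be possible), but the paper avoids it altogether: instead of dissolving the coefficient $\det\bigl[\binom{\alpha}{n_k+j-k}\bigr]$ from the expansion of $\det[1+U^\dagger D]^\alpha$ into a product over $\ell_j$, the paper keeps this binomial determinant intact and uses the dimension formula $d_r=\det\bigl[1/\Gamma(n_k+j-k+1)\bigr]\prod_j m_j!/(N-j)!$ (Corollary~\ref{CorolaDimR}, \eqref{DrSub1}) to cancel the \emph{exponential} coefficient against $d_r$. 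Cauchy--Binet then pairs the binomial determinant directly with the bialternant, and the resulting entry $\sum_{m\ge0}\binom{\alpha}{m+j-N}\mu_k^{2m}/m!$ is immediately ${}_1F_1(-\alpha;N-j+1;-\mu_k^2)\mu_k^{2(N-j)}/(N-j)!$, with uniform first parameter. So your route is valid but takes one extra step; the paper's choice of which determinant to cancel is what eliminates the need for contiguous relations.
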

In the case when the integration over $U(N)$ is replaced by positive Hermitian matrices, $A$ is negative and $\alpha=m$ is a positive integer, such integrals were introduced by Ingham \cite{Ingham} and Siegel \cite{Siegel}. For $m$ a negative integer such Ingham-Siegel integrals were used in \cite{Yan} as a tool to compute negative moments in the Fisher-Hartwig integral \eqref{sFHresult} for $\alpha=\beta=m$, cf. Theorem~\ref{Thm-GFH} below.

Both integrals \eqref{1IS} and \eqref{2IS} can be interpreted as hypergeometric functions of matrix argument(s), see \cite{Orlov}.
Note that there is an equivalent result for the first integral \eqref{1IS}, see \eqref{1ISalt} below, where the first index of ${}_1F_1$ is shifted to $-\alpha-j+1$. 

In the case when $AD$ has real eigenvalues and $A$ and $D$ are normal, \eqref{1IS} is real. 
This can be seen when using the invariance of the Haar measure $dU_N$ under the transformation $U\to U^\dag$, and interchanging $A\leftrightarrow D$, which leaves the first integral invariant. Because of the properties of $A$ and $D$ the eigenvalues of $AD$ agree with those of $DA$ (and are real). 
For the second integral, under the same conditions on $B$ and $C$, transforming $V\to V^\dag$ and swapping $B\leftrightarrow C$ in addition, the integral \eqref{2IS} is manifestly real. 

When in the first integral \eqref{1IS} both fixed matrices are proportional to the identity, $A=b\mathbf{1}_N$, $b\in\mathbb{C}$ and $D=\pm \mathbf{1}_N$, we obtain as a  corollary the integrals in \cite{TW} (after transforming $U\to U^\dag$). Because $||D||=1$ in this case, we have to restrict the possible range of $\alpha$, such that the integral on the left-hand side exists.
\begin{corollary}\label{Cor-TW-IS}
Let $\alpha,b\in\mathbb{C}$, with $\Re(\alpha)>-1$. Then the following identity holds:
\begin{eqnarray}
Z_{\alpha,\pm}^{\rm IS}(b) &:=& \int  dU_N \det \left[1\pm U^\dagger \right]^\alpha \exp \left[ b\Trace (U) \right]
\nonumber\\
&=& \frac{1}{(2\pi)^N}\detjk\left[\binom{\alpha}{j-k}{}_1F_1(-\alpha+j-k;j-k+1;\mp b)\right].
\label{1IS'}
\end{eqnarray}
\end{corollary}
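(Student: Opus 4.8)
The plan is to derive Corollary~\ref{Cor-TW-IS} directly from the first Ingham-Siegel integral \eqref{1IS} by specialising the external matrices to $A = b\mathbf{1}_N$ and $D = \pm\mathbf{1}_N$. With this choice the matrix $AD = \pm b\,\mathbf{1}_N$ has the $N$-fold degenerate eigenvalue $\mu_j^2 = \pm b$ for all $j$, so the right-hand side of \eqref{1IS} becomes a $0/0$ indeterminate form: both the determinant in the numerator and the Vandermonde $\Delta_N(\mu^2)$ in the denominator vanish in the confluent limit $\mu_1^2,\ldots,\mu_N^2 \to \pm b$. The core of the proof is therefore a confluence (l'Hôpital / derivative) argument resolving this limit.

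First I would write $\mu_k^2 = \pm b + \epsilon_k$ and expand. The standard way to handle the ratio of a determinant over a Vandermonde in a confluent limit is to replace the rows $t_k^{N-j}f(t_k)$ by successive derivatives: concretely, $\det_{j,k}[f_j(\mu_k^2)]/\Delta_N(\mu^2)$ tends, as all arguments coalesce to a common value $x$, to a single determinant of derivatives $\det_{j,k}\big[\tfrac{1}{(k-1)!}\partial_x^{k-1} f_j(x)\big]$ up to the sign/ordering convention fixed in \eqref{Vdet}. Here $f_j(x) = {}_1F_1(-\alpha;N-j+1;-x)\,x^{N-j}$, so I must compute the repeated derivatives of a product of a power and a confluent hypergeometric function at $x = \pm b$. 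Using the contiguous/derivative relations for ${}_1F_1$ together with the Leibniz rule, each entry collapses to a finite combination that I expect to reorganise into a single ${}_1F_1(-\alpha + j - k; j-k+1; \mp b)$ times a binomial coefficient $\binom{\alpha}{j-k}$, matching the claimed entry in \eqref{1IS'}. The overall constant $1/(2\pi)^N$ should emerge from the product of factorials $\prod(k-1)!$ generated by the derivative normalisation and the Haar-measure normalisation.

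An alternative, and possibly cleaner, route is to bypass \eqref{1IS} and evaluate the integral on the left of \eqref{1IS'} from scratch. Since $A$ and $D$ are now scalar multiples of the identity, the integrand $\det[1\pm U^\dagger]^\alpha \exp[b\,\Trace(U)]$ depends only on the eigenvalues $t_1,\ldots,t_N$ of $U$, so Andréief's formula applies and maps the group integral to the $N\times N$ Toeplitz determinant $\det_{j,k}[c_{j-k}]$, whose symbol is $(1\pm t)^\alpha \exp(b t)$ integrated against the eigenvalue angle. The Fourier coefficients $c_{j-k} = \tfrac{1}{2\pi}\int_0^{2\pi} (1\pm \euler^{-\iota\theta})^\alpha \exp(b\,\euler^{\iota\theta})\,\euler^{-\iota(j-k)\theta}\,d\theta$ can then be computed by expanding $(1\pm t)^\alpha$ in its binomial series and $\exp(bt)$ in its power series, and matching powers via the residue at $t=0$; the double sum resums into exactly $\binom{\alpha}{j-k}{}_1F_1(-\alpha+j-k;j-k+1;\mp b)$. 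This is the route I would actually follow, since it makes the entries and the $(2\pi)^{-N}$ prefactor transparent.

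The main obstacle in either approach is the same bookkeeping step: showing that the confluent limit of the ratio in \eqref{1IS} (or equivalently the Fourier coefficient integral) reorganises \emph{exactly} into the stated entry, including getting the shifted first parameter $-\alpha + j - k$, the second parameter $j-k+1$, the sign $\mp b$, and the binomial prefactor all correct and consistent across the range where $j-k$ is negative (where $\binom{\alpha}{j-k}$ vanishes for non-negative integer $\alpha$ but must be read via $\Gamma$-functions for general $\alpha$). I would finally address convergence: because $\|D\|=1$ here, the determinant $\det[1\pm U^\dagger]^\alpha$ is no longer automatically integrable, and the restriction $\Re(\alpha) > -1$ is precisely the condition ensuring that the singularity of $(1\pm t)^\alpha$ on the unit circle is integrable against Haar measure, which I would verify by examining the behaviour of the symbol near $t = \mp 1$.
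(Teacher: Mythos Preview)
Your proposal is correct and matches the paper's own proof almost exactly: the paper presents precisely the two routes you describe, in the same order of preference. It first derives \eqref{1IS'} via Andr\'ei\'ef's formula applied to the eigenvalue integral, expanding the symbol $(1\pm e^{-i\theta})^\alpha e^{be^{i\theta}}$ in a double series and picking out Fourier coefficients to obtain the entries $\binom{\alpha}{j-k}{}_1F_1(-\alpha+j-k;j-k+1;\mp b)$; it then gives, as a consistency check, the confluence argument from \eqref{1IS} (in the equivalent form \eqref{1ISalt}) via l'H\^opital, differentiating $N-k$ times in $\mu_k^2$ both numerator and the Vandermonde and setting $\mu_k^2=\pm b$.
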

In \cite{TW}[Theorem 1] these integrals appear as generating functions for $b\in\mathbb{R}$, in order to determine the lengths of the longest increasing respectively 
decreasing subsequences in words, with signs $\pm$ and $\alpha=\pm k$, $k\in\mathbb{N}$ in the symbol of the Toeplitz-determinant. There, these integrals enjoy an alternative representation in terms of Painlev\'e V \cite{TW}. Notice that due to the definition \eqref{1F1DEF} the matrix inside the determinant in \eqref{1IS'} is non-vanishing for $j<k$, as the  diverging Gamma-functions cancel.

\begin{theorem}[Generalised Fisher-Hartwig Integrals]
\label{Thm-GFH}
Let $A,B,C,D\in{\rm GL}(N,\mathbb{C})$ be fixed complex, invertible $N\times N$ matrices and denote by $\mu_j^2$ and $\nu_j^2$, $j=1,\ldots,N$, the eigenvalues of $AD$ and of $BC$, respectively. In the case that all norms are restricted, $||A||,||B||,||C||,||D||<1$, there is no restriction on $\alpha,\beta\in\mathbb{C}$. Else we have to require $\Re(\alpha+\beta)>-1$. Then the following two integrals are well defined and obey the identities  
\begin{eqnarray}
Z_{\alpha,\beta}^{\rm FH}( A, D ) &:=& \int  dU_N \det \Big[1+ AU\Big]^\alpha\det \left[1+ U^\dagger D\right]^\beta
\nonumber\\
&=& \frac{ \detjk\left[ _2F_1 (-\alpha-j+1;-\beta -j+1; N-j+1 ;  \mu_k^2)\, \mu_k^{2(N-j)} \right] 
}{\Delta_N(\mu^2)},
\label{1FH}
\end{eqnarray}
and 
\begin{eqnarray}
Z_{\alpha,\beta}^{\rm FH}( A, B, C, D ) &:=& \int  dU_N\int dV_N \det \Big[1+ AUBV\Big]^\alpha\det \left[1+ V^\dag CU^\dagger D\right]^\beta
\nonumber\\
&=& \frac{g_N(\alpha) g_N(\beta)}{\prod_{j=1}^{N} (j-1)!^2}
\frac{ \detjk\left[ _2F_1 (-\alpha-N+1;-\beta-N+1 ; 1 ;  \mu_j^2\nu_k^2)\right] 
}{\Delta_N(\mu^2)\Delta_N(\nu^2)}.\quad\quad
\label{2FH}
\end{eqnarray}
\end{theorem}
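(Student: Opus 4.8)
The plan is to treat both identities exactly as the Ingham-Siegel integrals of Theorem~\ref{Thm-IS}, via Balantekin's character expansion and the orthogonality of $\GL(N,\mathbb{C})$-characters. The one analytic input I would isolate first is the character expansion of a determinantal power: for $\|M\|<1$,
\[
\det[1+M]^\gamma=\sum_{\lambda}\det_{1\le i,j\le N}\Big[\binom{\gamma}{\lambda_i-i+j}\Big]\,\chi_\lambda(M),
\]
where the sum runs over signatures $\lambda=(\lambda_1\geq\dots\geq\lambda_N\geq0)$ and $\chi_\lambda(M)$ is the character (the Schur polynomial in the eigenvalues of $M$); this follows from $(1+t)^\gamma=\sum_k\binom{\gamma}{k}t^k$ and the bialternant formula for $s_\lambda$. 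The norm restrictions in the statement are precisely what make this series converge absolutely and allow its interchange with the Haar integration, while the boundary case with some norm equal to one, covered by $\Re(\alpha+\beta)>-1$, I would reach afterwards by analytic continuation in $\alpha,\beta$, using that the right-hand sides are analytic in these parameters as recalled after \eqref{2F1Def}.

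For the single-matrix integral \eqref{1FH} I would insert the expansions of $\det[1+AU]^\alpha$ and $\det[1+U^\dagger D]^\beta$ and integrate over $U(N)$ by the source-orthogonality relation
\[
\int dU_N\,\chi_\lambda(AU)\,\chi_\mu(U^\dagger D)=\delta_{\lambda\mu}\,\frac{\chi_\lambda(AD)}{d_\lambda},
\]
a direct consequence of Schur orthogonality, collapsing the double sum to $\sum_\lambda C_\lambda(\alpha)C_\lambda(\beta)\chi_\lambda(AD)/d_\lambda$ with $C_\lambda(\gamma)=\det_{1\le i,j\le N}[\binom{\gamma}{\lambda_i-i+j}]$. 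Setting $\ell_j:=\lambda_j+N-j$, substituting $\chi_\lambda(AD)=\det_{jk}[(\mu_k^2)^{\ell_j}]/\Delta_N(\mu^2)$ and the Weyl dimension $d_\lambda=\Delta_N(\ell)/\prod_{j=1}^N(j-1)!$ reduces everything to a single sum over strictly decreasing $\ell_1>\dots>\ell_N\geq0$. Here lies the main obstacle: unlike in the Ingham-Siegel case, where the exponential contributed a factor $\Delta_N(\ell)$ that cancelled the one from $1/d_\lambda$, now a factor $1/\Delta_N(\ell)$ survives, and a direct Cauchy-Binet is blocked because three determinants ($C_\lambda(\alpha)$, $C_\lambda(\beta)$, and $\det_{jk}[(\mu_k^2)^{\ell_j}]$) face a single Vandermonde in the denominator.

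To overcome this I would prove and then use the factorisation lemma
\[
\frac{\det_{1\le i,j\le N}\big[\binom{\gamma}{\ell_i+j-N}\big]}{\Delta_N(\ell)}=\frac{g_N(\gamma)}{\big(\prod_{j=1}^N(j-1)!\big)^2}\prod_{i=1}^N\binom{\gamma+N-1}{\ell_i},
\]
in which a binomial determinant divided by the Vandermonde collapses to a pure product; since both sides are symmetric in the $\ell_i$ this reduces to checking the product structure and the constant, e.g.\ by induction on $N$. Applying the lemma to the $\beta$-factor, absorbing $\binom{\beta+N-1}{\ell_j}$ into the $j$-th row of $\det_{jk}[(\mu_k^2)^{\ell_j}]$, and applying Cauchy-Binet against $C_\lambda(\alpha)$ produces the single determinant $\det_{jk}[\sum_{\ell}\binom{\alpha}{\ell+j-N}\binom{\beta+N-1}{\ell}(\mu_k^2)^{\ell}]$. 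Reindexing the inner sum by $\ell=N-j+m$ and applying the ratio test identifies it with $\binom{\beta+N-1}{N-j}\mu_k^{2(N-j)}\,{}_2F_1(-\alpha,-\beta-j+1;N-j+1;\mu_k^2)$; the $j$-dependent prefactors together with $g_N(\beta)$ and the factorials cancel to unity, and elementary row operations inside the determinant (equivalently contiguous relations of ${}_2F_1$) shift the first index to $-\alpha-j+1$, yielding \eqref{1FH} and the equivalent shifted form noted for the Ingham-Siegel integral.

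For the two-matrix integral \eqref{2FH} I would integrate first over $V$ by the same orthogonality relation with sources $X=AUB$ and $Y=CU^\dagger D$, which leaves $\sum_\lambda C_\lambda(\alpha)C_\lambda(\beta)\chi_\lambda\!\big(A\,U(BC)U^\dagger D\big)/d_\lambda$, and then over $U$ using $\int dU_N\,\rho_\lambda(U)\rho_\lambda(BC)\rho_\lambda(U^\dagger)=d_\lambda^{-1}\chi_\lambda(BC)\,\mathbf 1$ (Schur's lemma, with $\rho_\lambda$ the representation matrices and $\chi_\lambda=\Trace\rho_\lambda$). This yields $\sum_\lambda C_\lambda(\alpha)C_\lambda(\beta)\chi_\lambda(AD)\chi_\lambda(BC)/d_\lambda^2$ with $\chi_\lambda(BC)=\det_{jk}[(\nu_k^2)^{\ell_j}]/\Delta_N(\nu^2)$. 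Now two factors $1/\Delta_N(\ell)$ survive, so I would apply the factorisation lemma to both $C_\lambda(\alpha)$ and $C_\lambda(\beta)$; the coefficients become $\prod_i\binom{\alpha+N-1}{\ell_i}\binom{\beta+N-1}{\ell_i}$, and a single Cauchy-Binet pairing the $\mu$- and $\nu$-exponents gives $\det_{jk}[\sum_\ell\binom{\alpha+N-1}{\ell}\binom{\beta+N-1}{\ell}(\mu_j^2\nu_k^2)^{\ell}]$, whose entries are exactly ${}_2F_1(-\alpha-N+1,-\beta-N+1;1;\mu_j^2\nu_k^2)$. The two lemma constants assemble into $g_N(\alpha)g_N(\beta)/\prod_{j=1}^N(j-1)!^2$, reproducing \eqref{2FH} with no further transformation needed. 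The only delicate points throughout are the convergence of the character expansion (guaranteed by the norm hypotheses) and the continuation to the boundary $\|A\|,\|B\|,\|C\|,\|D\|=1$ under $\Re(\alpha+\beta)>-1$.
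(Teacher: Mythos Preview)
Your approach is essentially the paper's own. The ``factorisation lemma'' you isolate,
\[
\frac{\det_{1\le i,j\le N}\big[\tbinom{\gamma}{\ell_i+j-N}\big]}{\Delta_N(\ell)}
=\frac{g_N(\gamma)}{\big(\prod_{j=1}^N(j-1)!\big)^{2}}\prod_{i=1}^N\binom{\gamma+N-1}{\ell_i},
\]
is exactly the combination of the dimension formulae \eqref{dimensionR} and \eqref{DrSub2} in Corollary~\ref{CorolaDimR}: writing $d_r=\Delta_N(\ell)/\prod_j(N-j)!$ and solving \eqref{DrSub2} for the binomial determinant gives precisely your product. The paper proves this by direct Gamma-function manipulation rather than induction, but the content is identical, and both proofs then proceed via character expansion, orthogonality \eqref{MatOg} and \eqref{1char}, and Cauchy--Binet.

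One small point where the paper is cleaner: for \eqref{1FH} you apply the lemma only to the $\beta$-factor, arrive at entries ${}_2F_1(-\alpha,-\beta-j+1;N-j+1;\mu_k^2)$ (the $\alpha\leftrightarrow\beta$ swap of the alternative form \eqref{1FH'}), and then defer the shift $-\alpha\to-\alpha-j+1$ to unspecified row operations/contiguous relations. The paper instead applies the dimension identity \emph{symmetrically} to both binomial determinants via \eqref{Sub2-Sub1}, which lands directly on the symmetric form \eqref{1FH} without that extra step; indeed, the paper itself remarks that deducing the equivalence of the two forms by contiguous relations alone is expected but not carried out. So your argument is correct, but to obtain \eqref{1FH} as stated you should either follow the symmetric route or actually perform the row reduction.
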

Once again both integrals can be seen as hypergeometric functions of matrix argument(s) \cite{Orlov}. A similar partition function also appears in the context of the sum of two Wishart ensembles with unequal covariances \cite{Kumar}.
The first integral \eqref{1FH} has appeared previously in the literature \cite{FK1} for $\alpha=\beta=m$, where $m$ is a positive or negative integer. There, different representations were presented, in terms of determinants of size $m$. For  $\alpha=m$ and $\beta=n$ positive integers such integrals have been related to integrals over Jacobi ensembles of size $\min(m,n)$ in \cite{FK2}. Such relations are an instance of more general relations between matrix integrals and were called Colour-Flavour transformation by Zirnbauer \cite{Martin}. For general ratios of integer powers of such determinants averaged over the unitary group, with external matrices being proportional to the identity, we refer to \cite{CFZ,CFN}.
Notice that an equivalent representation to \eqref{1FH} exist, see \eqref{1FH'}, with $-\beta-j+1\to-\beta$ in the second argument of ${}_2F_1$.\\

If we choose $B=C=X$ to be a Hermitian random matrix and integrate over it with respect to the flat Lebesgue measure $dH_N$ for complex Hermitian matrices, we obtain two examples for group integrals with a Pfaffian structure, rather than a determinant. 
This structure follows closely from \cite{KKS15}, where such a matrix integral results form a matrix Hubbard-Stratonovic transformation. Applied to Theorem~\ref{Thm-IS},  \eqref{2IS} we obtain the following:
\begin{corollary}[Integrated Unitary Ingham-Siegel Integral]\label{cor:JIS}
Let $ A,D \in {\rm GL}(N,\mathbb{C})$ be fixed, with eigenvalues $a_j^2$, $j=1,\ldots,N$ of $DA$,  and $X=X^\dag$ be a complex Hermitian random matrix distributed according to $\exp[-\Trace v(X)]$, where $v$ decays sufficiently fast at infinity. Furthermore, $\alpha\in \Cset$ is a fixed constant with $\Re (\alpha)>-1$. Then the following integral is obtained:
\begin{eqnarray}
J_{\alpha}^{\rm IS}(A,D)
&:=& \int dH_N e^{-\Trace v(X)}\int  dU_N\!\int dV_N \det \left[1+ V^\dag XU^\dagger D\right]^\alpha \exp \left[ \Trace (AUXV) \right]
\nonumber\\
&=&c_N\frac{N!  g_N(\alpha)}{\Delta_N(a^2)}  \left\{
\begin{array}{ll}
\Pf \left[E_\alpha^{\rm IS}(a_j,a_k) \right]_{j,k=1}^N &, N \text{ even}, \\[2mm]
\Pf\left[ \begin{array}{ c | c }
E_\alpha^{\rm IS}(a_j,a_k) & F_\alpha^{\rm IS}(a_j) \\\hline
-F_\alpha^{\rm IS}(a_k) & 0 \\
\end{array} \right]_{j,k=1}^N &, N \text{ odd}, \\
\end{array}
\right.
\label{JSPfaff}
\end{eqnarray}
where $c_N=\frac{\pi^{\frac{N(N-1)}{2}}}{\prod_{j=1}^N j!}$, and we denote
\begin{eqnarray}
 E_\alpha^{\rm IS}(a_j,a_k) &:=& \frac12\int_{\Rset^2} dxdy\frac{x-y}{x+y} e^{-v(x)-v(y)} 
 \nonumber  \\ 
 &&\times \,
 \bigg( {}_1F_1 (1-\alpha-N+1;1;-a^2_jx^2) {}_1F_1 (1-\alpha-N+1; 1 ;- a^2_k y^2) \nonumber  \\ 
&&\quad- {}_1F_1 (1-\alpha-N+1;1;-a^2_kx^2) {}_1F_1 (1-\alpha-N+1; 1 ; -a^2_j y^2)  \bigg),
 \label{EISajak}
\end{eqnarray}
and 
\begin{equation}
 F_\alpha^{\rm IS}(a_j) := \int_{\Rset} dx e^{-v(x)} {}_1F_1 (1-\alpha -N+1; 1 ; -a^2_j x^2)  . 
 \label{FISaj}
\end{equation}
\end{corollary}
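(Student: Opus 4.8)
The plan is to substitute the evaluated two--matrix integral \eqref{2IS} of Theorem~\ref{Thm-IS} into the Hermitian average and to reduce the $X$--integration to an eigenvalue integral of Pfaffian type. Specialising \eqref{2IS} to $B=C=X$, the eigenvalues of $AD$ are the $a_j^2$ (since $AD$ and $DA$ are isospectral), whereas those of $BC=X^2$ are $x_k^2$, with $x_1,\dots,x_N\in\Rset$ the eigenvalues of the Hermitian matrix $X$. Hence the inner double group integral is already carried out and depends on $X$ only through the $x_k^2$; together with $\exp[-\Trace v(X)]=\exp[-\sum_k v(x_k)]$ the entire integrand is a class function of $X$. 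Diagonalising $X$ in the flat measure $dH_N$, the angular integration is therefore trivial and yields only the coset volume, leaving the Weyl--Jacobian $\Delta_N(x)^2$ together with the constant $c_N$ of the statement. Absolute convergence --- which also legitimises interchanging the group and matrix integrations and invoking \eqref{2IS} for Hermitian $X$ by analytic continuation in the matrix entries (so that the auxiliary bound $\|C\|=\|X\|<1$ may be dropped) --- is ensured by $\Re(\alpha)>-1$ together with the fast decay of $v$.

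After this reduction the Jacobian $\Delta_N(x)^2$ meets the factor $\Delta_N(x^2)^{-1}$ inherited from \eqref{2IS}. Using $\Delta_N(x^2)=\Delta_N(x)\prod_{1\le j<k\le N}(x_j+x_k)$ these combine into
\[
\frac{\Delta_N(x)^2}{\Delta_N(x^2)}=\prod_{1\le j<k\le N}\frac{x_j-x_k}{x_j+x_k}
=\Pf\!\left[\frac{x_j-x_k}{x_j+x_k}\right]_{j,k=1}^N,
\]
the last equality being the Schur Pfaffian identity, valid for even $N$; for odd $N$ one uses the bordered version in which the antisymmetric matrix is augmented by a final row and column of ones. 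This is precisely the step that converts a would--be determinant into a Pfaffian, in line with the analogy to \cite{KKS15}. Writing $\phi_j$ for the confluent hypergeometric kernel function entering \eqref{EISajak} and \eqref{FISaj}, which is exactly the one delivered by \eqref{2IS} upon setting $\nu_k^2=x_k^2$, the integrand is now the product of this (bordered) Pfaffian, the determinant $\detjk\left[\phi_j(x_k)\right]$, and the symmetric weight $\prod_k e^{-v(x_k)}$.

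Since the weight is symmetric while both the Pfaffian and $\detjk\left[\phi_j(x_k)\right]$ are antisymmetric under permutations of the $x_k$, the $N!$ terms in the Leibniz expansion of the determinant all contribute equally; this produces a factor $N!$ and reduces the problem to a de Bruijn integral with antisymmetric kernel $\tfrac{x-y}{x+y}$ and weight $e^{-v}$. De Bruijn's Pfaffian integration formula then collapses the $N$--fold integral to a single Pfaffian whose $(j,k)$ entry is the antisymmetrised double integral $E_\alpha^{\rm IS}(a_j,a_k)$ of \eqref{EISajak}. For odd $N$ the extra row and column of ones are integrated against $e^{-v(x)}\phi_j(x)$ and generate the border entries $F_\alpha^{\rm IS}(a_j)$ of \eqref{FISaj}, producing the bordered Pfaffian in \eqref{JSPfaff}. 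Collecting $c_N$, the $N!$, and the $g_N(\alpha)$ and $\Delta_N(a^2)^{-1}$ carried over from \eqref{2IS} reproduces the stated prefactor.

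The main obstacle is the combinatorial core of the argument: recognising $\prod_{j<k}\tfrac{x_j-x_k}{x_j+x_k}$ as a (bordered) Pfaffian via the Schur identity, and then executing the de Bruijn integration so as to produce the antisymmetrised double integrals \eqref{EISajak} while correctly tracking the numerical constants and, especially, the even/odd--$N$ dichotomy responsible for the border terms \eqref{FISaj}. A more routine but necessary point is the analytic justification: establishing absolute convergence for $\Re(\alpha)>-1$, licensing the interchange of the $U,V$ and $X$ integrations, and confirming that \eqref{2IS}, derived under $\|C\|,\|D\|<1$, continues to Hermitian $X$ of arbitrary norm.
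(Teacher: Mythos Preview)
Your proposal is correct and follows essentially the same route as the paper: insert \eqref{2IS} with $B=C=X$, diagonalise $X$ to pick up $c_N\Delta_N(x)^2$, form the ratio $\Delta_N(x)^2/\Delta_N(x^2)$ and rewrite it via the Schur--Pfaffian identity (bordered for odd $N$), and then apply de Bruijn's integration formula to arrive at \eqref{JSPfaff}. The only small point you do not make explicit, which the paper does, is that the apparent pole of the kernel $\tfrac{x-y}{x+y}$ at $x=-y$ is harmless because the functions $\phi_j(x)={}_1F_1(1-\alpha-N+1;1;-a_j^2x^2)$ are even in $x$, so the antisymmetrised integrand in \eqref{EISajak} vanishes there.
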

A similar result holds when integrating the generalised Fisher-Hartwig integral \eqref{2FH} in Theorem~\ref{Thm-GFH} over a random matrix $X=B=C$, see 
Corollary~\ref{cor:JFH} in Section \ref{proofs}. It is also possible of course to take the one-matrix integrals \eqref{1IS} or \eqref{1FH}, make $A=D=X$ random and integrate over $X$ in the same way. This will then no longer depend on an external matrix, and such results can also be obtained from Corollary~\ref{cor:JIS} and Corollary~\ref{cor:JFH}, in taking the degenerate limit $a_k=1$ for $k=1,\ldots,N$. We will not give more details here.  

The remainder of this paper is organised as follows. In Section \ref{charev}
we will review elements of the character expansion of the unitary group as needed to prove our theorems. This is largely based on \cite{Baha}. In particular, we will prove a Lemma providing three equivalent formulae for the dimension of an irreducible representation of the unitary group needed later. Section \ref{proofs} is devoted to the proofs of our main Theorems~\ref{Thm-IS} and \ref{Thm-GFH} and Corollaries~\ref{Cor-TW-IS} and \ref{cor:JIS}, in that order. After concluding and discussing open problems in Section \ref{conclusio}, we provide some useful identities for binomial coefficients and Gau{\ss}' hypergeometric function in Appendix \ref{appA}. In Appendix \ref{appB} the standard Fisher-Hartwig integral without external matrices is presented to which we compare as a limiting case, together with Andr\'ei\'ef's integral identity.

\section{The Character Expansion Revisited}\label{charev}

In this section, we will recall the character expansion of the unitary group $U(N)$ and more generally ${\rm GL}(N,\mathbb{C})$, following closely \cite{Baha}. As an important extension we will provide three different formulae for the dimension of a representation in Corollary~\ref{CorolaDimR} below.  
An extension to other compact groups can be found in \cite{Baha2}. The material presented here will be used in the proofs of the main theorems in Section \ref{proofs}.

An irreducible representation $r=(n_1,\ldots,n_N)$ of the unitary group $U(N)$ can be labeled by an ordered partition of non negative integers $n_1\geq \ldots \geq n_N\geq 0$. Then, as shown by Weyl in \cite{Weyl}, the character $\chi_r(U)$ of representation $r=(n_1,\ldots,n_N)$ of $U\in U(N)$ is given by
\begin{align}
  \chi_{r}(U)=\chi_{(n_1,\dots,n_N)}(U)=\frac{\detjk \left[ t_j^{n_k+N-k} \right] }{\Delta_N \left( t\right) }=\frac{\detjk \left[ t_j^{n_k+N-k} \right] }{\detjk \left[ t_j^{N-k} \right] },
\label{WCF}
\end{align}
called Weyl character formula (for $U(N)$).
If we have such a character expansion, we can use the Schur orthogonality
\begin{align}
  \int  dU_N \; U^{(r)}_{jk} U^{*(r')}_{lm} = \frac{1}{d_r} \delta^{rr'} \delta_{jl} \delta_{km},
\label{shurOg}
\end{align}
where  $U^{(r)}_{jk}$ are the components of $U$ in the representation $r$, and $d_r = \chi_{r}(\mathbf{1}_{N} )$ is the dimension of the representation, the character of the identity. In particular, since $\chi_r(U) = \sum_{j=1}^{d_r} U^{(r)}_{jj}$, we have the following orthogonality of characters, 
\begin{align}
  \int  dU_N \; \chi_r(U) \chi_{r'}(U^\dagger ) = \delta^{rr'},
\label{ChOg}
\end{align}
that will allow us to do the integration over the unitary group. 

The irreducible representation $r$ of the general linear group $\GL(N,\mathbb{C})$ is similar to $U(N)$ and  can also be labeled by the partition $r=(n_1,\cdots,n_N)$. In particular, Weyl's character formula \eqref{WCF} equally holds.
Therefore, $\chi_r(AU)=\sum_{j,k=1}^{d_r} A^{(r)}_{j,k }U^{(r)}_{k,j}$, where $A\in \GL(N,\mathbb{C})$ and $U\in U(N) \subset \GL(N,\mathbb{C})$, can be seen as a character of $\GL (N,\mathbb{C})$ and we can generalise \eqref{ChOg} through \eqref{shurOg}. We obtain
\begin{align}
\int dU_N \chi_r \left( B U \right) \chi_{r'} ( A U^\dagger ) = \delta^{rr'} \frac{1}{d_r}\chi_r \left(AB\right),
\label{MatOg}
\end{align}
with $A,B \in \GL (N,\mathbb{C})$. This step will allow us to do the group integrals in the presence of fixed matrices $A,B$, too. In the sequel, it will be important to be able to cancel the factors $d_r$ in the denominator. For that purpose, different explicit forms given in the following corollary will be most helpful.
\begin{corollary}[Dimension Formulae]
\label{CorolaDimR}
The dimension $d_r$ of a given representation $r = \left( n_1, \cdots n_N \right)$ can be written in three equivalent forms as
\begin{align}
d_r &= \frac{\Delta_N\left(n_1-1, \ldots, n_N-N \right)}{\prod_{j=1}^{N} (N-j)!},
\label{dimensionR}\\
d_r &= \detjk \left[ \frac{1}{\Gamma \left( n_k+j-k+1 \right)} \right] \prod_{j=1}^N \frac{(n_j+N-j)!}{(N-j)!}, \label{DrSub1} \\ 
d_r &= \detjk \left[ \binom{\alpha}{n_k+j-k}\right] \prod_{j=1}^N  \frac{(n_j+N-j)! \Gamma ( \alpha - n_j+j )}{(N-j)! \Gamma (\alpha +N-j+1) } ,
 \label{DrSub2}
\end{align} 
with $\alpha \in \Cset$ being fixed and arbitrary. 
\end{corollary}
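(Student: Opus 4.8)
The three identities all evaluate the same quantity $d_r=\chi_r(\mathbf 1_N)$, so the plan is to establish \eqref{dimensionR} first and then obtain \eqref{DrSub1} and \eqref{DrSub2} as determinantal rewritings of it. Throughout I set $l_k:=n_k+N-k$, so that $l_1>\dots>l_N\ge 0$ and, by translation invariance of the Vandermonde, $\Delta_N(l_1,\dots,l_N)=\Delta_N(n_1-1,\dots,n_N-N)$. For \eqref{dimensionR} I would start from the Weyl character formula \eqref{WCF} and take the confluent limit $t_j\to 1$ defining $d_r$. Writing $t_j=1+\epsilon x_j$ and Taylor expanding $t_j^{l_k}$, the leading order in $\epsilon$ of $\detjk\left[t_j^{l_k}\right]$ factorises as $\Delta_N(t)$ times (up to the overall sign $(-1)^{N(N-1)/2}$) the determinant $\detjk\left[\binom{l_k}{j-1}\right]$. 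Since $\binom{l_k}{j-1}$ is a polynomial of degree $j-1$ in $l_k$ with leading coefficient $1/(j-1)!$, row reduction turns this into $\Delta_N(l)/\prod_{j=1}^N(j-1)!$, and the sign cancels. As $\prod_{j=1}^N(j-1)!=\prod_{j=1}^N(N-j)!$, this is exactly \eqref{dimensionR} (equivalently, the classical Weyl dimension formula $d_r=\prod_{j<k}(l_j-l_k)/(k-j)$).

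For \eqref{DrSub1} I would work directly from \eqref{dimensionR}. In the entry $1/\Gamma(n_k+j-k+1)=1/(l_k-(N-j))!$ I factor $1/l_k!$ out of column $k$; the remaining entry is the falling factorial $l_k^{\underline{N-j}}=l_k!/(l_k-(N-j))!$, a monic polynomial of degree $N-j$ in $l_k$. Hence
\[
\detjk\left[\frac{1}{\Gamma(n_k+j-k+1)}\right]=\frac{1}{\prod_{k=1}^N l_k!}\,\detjk\left[l_k^{\underline{N-j}}\right]=\frac{\Delta_N(l)}{\prod_{j=1}^N(n_j+N-j)!},
\]
the last step being the standard reduction of a generalised Vandermonde with monic rows to $\Delta_N(l)$. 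Multiplying by $\prod_j (n_j+N-j)!/(N-j)!$ then recovers \eqref{dimensionR}, which proves \eqref{DrSub1}.

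Formula \eqref{DrSub2} is the delicate one and I expect it to be the main obstacle. Clearing the $\alpha$-dependent prefactor, it is equivalent to the binomial-determinant evaluation
\[
\detjk\left[\binom{\alpha}{n_k+j-k}\right]=d_r\prod_{j=1}^N\frac{(N-j)!\,\Gamma(\alpha+N-j+1)}{(n_j+N-j)!\,\Gamma(\alpha-n_j+j)},
\]
whose right-hand side is a polynomial in $\alpha$ of degree $\sum_j n_j$ (each ratio of Gamma functions being a product of $n_j+N-2j+1$ linear factors in $\alpha$). I would prove this as an identity of polynomials in $\alpha$. First, both sides have degree $\sum_j n_j$: for the left-hand side one uses $\binom{\alpha}{m}=\alpha^m/m!+O(\alpha^{m-1})$, factors $\alpha^{\,n_k-k}$ from column $k$ and $\alpha^{j}$ from row $j$, and reads off the leading coefficient $\detjk\left[1/\Gamma(n_k+j-k+1)\right]$, which by \eqref{DrSub1} matches the leading coefficient $d_r\prod_j(N-j)!/(n_j+N-j)!$ of the right-hand side.

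It then remains to match the $\sum_j n_j$ zeros: the right-hand side vanishes precisely at those integers $\alpha$ where one of the Gamma ratios vanishes, and at each such $\alpha$ one must check that the corresponding rows of $\left[\binom{\alpha}{n_k+j-k}\right]$ become linearly dependent, so that the left-hand side vanishes to the same order. Having matched degree, leading coefficient and all zeros with multiplicity, the two polynomials coincide. The bookkeeping of the zero \emph{multiplicities} — in particular verifying that the determinant vanishes to the full order when several factors collapse at the same integer $\alpha$ — is where the real work lies. An appealing alternative is to deduce the displayed identity from a standard binomial-determinant (Krattenthaler-type) evaluation, or to argue first that the right-hand side of \eqref{DrSub2} is independent of $\alpha$ and then pass to the limit $\alpha\to\infty$, where it collapses directly onto \eqref{DrSub1}.
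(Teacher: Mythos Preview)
Your treatment of \eqref{dimensionR} and \eqref{DrSub1} matches the paper's: both take the confluent limit of Weyl's formula for the first and, for the second, pull $1/l_k!$ out of each column and observe that the remaining falling factorials $l_k^{\underline{N-j}}$ are monic polynomials of degree $N-j$, collapsing the determinant to $\Delta_N(l)$.

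For \eqref{DrSub2} your route diverges from the paper's. You view the claimed determinant as a polynomial in $\alpha$, match degree and leading coefficient via \eqref{DrSub1}, and then propose to match the zeros (with multiplicity). As you note, the multiplicity bookkeeping is the real obstacle, and it is left open; your $\alpha\to\infty$ alternative only pins down the leading term, since showing the right-hand side is constant in $\alpha$ \emph{is} the identity. A small slip: the parenthetical ``each ratio of Gamma functions being a product of $n_j+N-2j+1$ linear factors'' is not correct termwise (for large $j$ this count is negative); only after reindexing the numerator product so that $\Gamma(\alpha+j)$ is paired with $\Gamma(\alpha-n_j+j)$ does each factor become a genuine polynomial, of degree $n_j$.

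The paper avoids all of this with a single row operation. Writing $m_k=n_k+N-k$, it uses the Chu--Vandermonde identity
\[
\binom{\alpha+N-j}{m_k}=\sum_{\ell=0}^{N-j}\binom{\alpha}{m_k-N+j+\ell}\binom{N-j}{\ell},
\]
which says that row $j$ of $\bigl[\binom{\alpha+N-j}{m_k}\bigr]$ is row $j$ of $\bigl[\binom{\alpha}{m_k-N+j}\bigr]$ plus a linear combination of later rows. Hence the two determinants agree. In $\bigl[\binom{\alpha+N-j}{m_k}\bigr]$ one then factors $\Gamma(\alpha+N-j+1)$ from each row and $1/m_k!$ from each column, leaving $\det\bigl[1/\Gamma(\alpha+N-j-m_k+1)\bigr]$, which reduces to $\Delta_N(m)$ by exactly the same monic-polynomial argument you used for \eqref{DrSub1} (now with variable $-m_k$). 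Collecting the factors yields \eqref{DrSub2} directly. This is both shorter and cleaner than the zero-matching programme: one binomial identity replaces the entire root-counting step.
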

Notice that because the Gamma-function has  first order poles at non positive integers, for such $\alpha$ we have singularities in the determinant \eqref{DrSub2}. However, they will cancel with zeros in the denominator, making the relation well defined. For binomial coefficients with general $\alpha$ see Appendix \ref{appA}.

\begin{proof}
The first form \eqref{dimensionR} is standard and can be found in textbooks, see e.g. \cite{Weyl}. Alternatively, it is not difficult to derive it by applying the rule of l'H\^opital to the Weyl character \eqref{WCF}, in the limit when all eigenvalues degenerate to $t_j\to1$, for $j=1,\ldots,N$.

We will first prove the equivalence between \eqref{dimensionR} and \eqref{DrSub1}. The idea is to rewrite the determinant of the Gamma-function, such that we arrive at the Vandermonde determinant in \eqref{dimensionR}. 
Defining 
\begin{equation}
m_j=n_j+N-j, 
\label{mjdef}
\end{equation}
the Gamma-function can be rewritten as
\begin{align}
\frac{1}{\Gamma(m_k-N+j+1)} &= \frac{(m_k-N+j+1)\cdots (m_k-1)m_k}{\Gamma\left( m_k +1 \right)} \nonumber\\
&= \frac{m_k^{N-j}+\mathcal O (m_k^{N-j-1})}{m_k!}.
\end{align}
In the determinant the lower order terms in the numerator 
can be subtracted by using the invariance of the determinant. We obtain
\begin{equation}
  \detjk \left[ \frac{1}{\Gamma \left( m_k-N+j+1 \right)} \right] 
  =\detjk \left[ m_k^{N-j} \right] \prod_{k=1}^N \frac{1}{m_k!},
\label{drcorrola1}
\end{equation}
as the factorials can be taken out of the determinant.
The right-hand side is proportional to the Vandermonde determinant  $\Delta_N(m_1,\ldots,m_N)$. Using its product form in \eqref{Vdet}, the variables $m_k$ can be shifted by $N$ to $m_k-N=n_k-k$ inside the determinant. This yields the Vandermonde determinant in \eqref{dimensionR} 
and thus \eqref{DrSub1} follows.

A similar calculation can be done for the equivalence between \eqref{dimensionR} and \eqref{DrSub2}. For this we are going to use 
\begin{align}
\sum_{\ell=0}^{N-j} \binom{\alpha}{m_k - N + j + \ell} \binom{N-j}{\ell} = \binom{\alpha +N -j}{m_k},
\end{align}
which is a special case of \eqref{ApendixA} shown in Appendix \ref{appA}. The infinite sum given there is truncated by the vanishing of the second binomial coefficient under the sum for $\ell\geq N-j$.
This can be used to obtain the following identity of determinants,
\begin{align}
\detjk \left[\binom{\alpha +N -j}{m_k} \right] &= \detjk \left[ \sum_{\ell=0}^{N-j} \binom{\alpha}{m_k - N + j + \ell} \binom{N-j}{\ell} \right] \nonumber \\
&=  \detjk \left[ \binom{\alpha}{m_k - N + j } \right] .
\label{DetBinomProp}
\end{align}
In the second step we have again used the invariance of the determinant under the addition of multiples of rows (or columns), to remove the summands with $\ell>0$. The right-hand side is the determinant in \eqref{DrSub2}, that we will now reduce again to the Vandermonde determinant in \eqref{dimensionR}.
Using the definition of the binomial coefficient we obtain for the  left-hand side from the definition \eqref{binomDef} 
\begin{align}
\detjk \left[ \binom{\alpha +N -j}{m_k}\right] &= \detjk \left[ \frac{ \Gamma (\alpha+N-j+1) }{\Gamma\left(\alpha+N-j-m_k+1 \right) m_k!} \right] \nonumber \\
&= \detjk \left[ \frac{1}{\Gamma\left(\alpha+N-j-m_k+1 \right)}\right] \prod_{j=1}^N  \frac{\Gamma (\alpha+N-j+1)}{m_j! }.
\end{align}
Similar to the argument leading to \eqref{drcorrola1}, we can rewrite 
\begin{align}
\frac{1}{\Gamma\left(\alpha+N-j-m_k+1 \right)}=\frac{(\alpha+n-j-m_k+1)\cdots(\alpha+N-m_k-1)}{\Gamma(\alpha+N-m_k)}=\frac{(-m_k)^{j-1}+\mathcal O (m_k^{j-2})}{\Gamma(\alpha+N-m_k)},
\end{align}
to obtain
\begin{align}
 \detjk \left[ \frac{1}{\Gamma\left(\alpha+N-j-m_k+1 \right)}\right] =  \frac{
\detjk \left[ (-m_k)^{j-1} \right] 
}{\prod_{k=1}^N\Gamma \left(\alpha + N - m_k \right)}.
\end{align}
The determinant on the right-hand side is again the Vandermonde determinant  $\Delta_N \left( m_1, \dots , m_N \right)$, and with the same argument as before,  collecting all factors, we 
to get the claimed \eqref{DrSub2}.
\end{proof}

Before we go on with the character expansion, let us introduce the well-known Cauchy--Binet identity, see \cite{Mehta2}, which can be seen as a discrete version of the Andr\'eief integral identity \eqref{Andreief}.
\begin{lemma}[Cauchy--Binet Identity]
\label{CB}
Let $a_{j,m}, b_{j,m} \in \Cset$ be given complex numbers for $j=1,\dots,N$, and $m \in \Zset$. Then, it follows
\begin{align}
\sum_{m_1 > \cdots > m_N} \detij \left[ a_{j,m_k} \right] \detij \left[ b_{j,m_k} \right] = \detjk \left[ \sum_{m=-\infty}^\infty a_{jm} b_{km} \right] ,
\end{align}
under the condition that the sums appearing 
converge absolutely, for all $j,k=1,\dots,N$.
\end{lemma}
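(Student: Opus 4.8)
The plan is to prove the identity by expanding the determinant on the right-hand side with the Leibniz formula and reorganising the resulting multiple sum. It is worth noting first that, writing $A=[a_{jm}]$ and $B=[b_{km}]$ as $N\times\infty$ arrays indexed by $m\in\Zset$, the quantity $\sum_{m}a_{jm}b_{km}$ is precisely the $(j,k)$ entry of $AB^{\mathrm{T}}$, so the statement is the extension of the classical Cauchy--Binet formula $\det(AB^{\mathrm{T}})=\sum_{S}\det A_S\,\det B_S$ to a countable column index, the sum over $N$-element column subsets $S\subset\Zset$ being written as the sum over strictly decreasing tuples $m_1>\cdots>m_N$.

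First I would apply Leibniz to the right-hand side,
\[
\detjk\Big[\,\textstyle\sum_{m} a_{jm}b_{km}\,\Big]
=\sum_{\sigma\in S_N}\sign(\sigma)\prod_{j=1}^N\sum_{m_j} a_{j,m_j}\,b_{\sigma(j),m_j},
\]
and then expand the product of the $N$ inner sums into a single sum over $(m_1,\dots,m_N)\in\Zset^N$ with the indices ranging freely. Factoring out $\prod_j a_{j,m_j}$ and carrying out the sum over $\sigma$ first, I would recognise the signed permutation sum $\sum_{\sigma}\sign(\sigma)\prod_j b_{\sigma(j),m_j}=\detij\big[b_{j,m_k}\big]$. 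This yields
\[
\detjk\Big[\,\textstyle\sum_{m} a_{jm}b_{km}\,\Big]
=\sum_{(m_1,\dots,m_N)\in\Zset^N}\Big(\prod_{j=1}^N a_{j,m_j}\Big)\detij\big[b_{j,m_k}\big].
\]

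To finish, I would exploit that $\detij[b_{j,m_k}]$ vanishes whenever two of the indices $m_k$ coincide, so that only tuples with pairwise distinct entries survive. Each such tuple is a unique permutation $m_k=n_{\tau(k)}$ of its decreasing rearrangement $n_1>\cdots>n_N$; permuting columns gives $\detij[b_{j,m_k}]=\sign(\tau)\,\detij[b_{j,n_k}]$, while the remaining free product reassembles into $\sum_\tau\sign(\tau)\prod_j a_{j,n_{\tau(j)}}=\detij[a_{j,n_k}]$. Collecting terms reproduces $\sum_{n_1>\cdots>n_N}\detij[a_{j,n_k}]\detij[b_{j,n_k}]$, which is the left-hand side after relabelling.

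The one genuinely analytic point — and the place where the hypothesis enters — is the interchange of the infinite summations over the $m_j$ with the finite sum over $\sigma\in S_N$ in the first step; I would justify this by Fubini's theorem for series, since absolute convergence of each $\sum_m a_{jm}b_{km}$ guarantees the ambient multiple series is absolutely summable, so all rearrangements above are legitimate. As an alternative, one could truncate the index set $\Zset$ to a finite window $\{-M,\dots,M\}$, invoke the ordinary finite Cauchy--Binet identity, and pass to the limit $M\to\infty$ using the same absolute convergence; since the combinatorial content is identical, I would present the direct expansion above.
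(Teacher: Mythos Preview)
Your argument is correct: the Leibniz expansion, the interchange of the finite $\sigma$-sum with the $N$-fold infinite sum (justified because $\sum_m|a_{jm}b_{km}|<\infty$ for every pair $(j,k)$, so the ambient multi-sum factorises and is absolutely summable), and the reduction from unrestricted tuples to strictly decreasing ones via antisymmetry all go through as you describe. The alternative you mention --- truncating to a finite index window and invoking the ordinary Cauchy--Binet identity, then letting the window exhaust $\Zset$ --- is an equally valid route and indeed packages the same combinatorics.

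The paper itself does not prove Lemma~\ref{CB}; it is stated as a known identity with a reference to Mehta's book \cite{Mehta2} and the remark that the finite-sum version is classical while Mehta covers the infinite case needed here. So there is no ``paper's own proof'' to compare against --- you have supplied a complete self-contained argument where the authors simply cite the literature.
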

In many cases, Cauchy--Binet is the name of the lemma in the case of finite sums, whereas \cite{Mehta2} presents both finite and infinite sums as needed here.

As a last ingredient, we state the expansion of an invariant (generating) function in terms of characters. It in based on \cite{Baha}, to where we refer for a proof and a more detailed discussion.
\begin{proposition}
\label{Theorem_char_exp}
Let the Laurent series or respectively the generating functional
\begin{align}
L(t)=\sum_{n=-\infty}^\infty a_n t^n
\label{Lau}
\end{align}
converge absolutely on $\abs t =1$. For any unitary matrix $U \in U(N)$, with eigenvalues $t_j$ for $j=1,\dots,N$, we have 
\begin{align}
\prod_{j=1}^N L(t_j) = \sum_{n_1 \geq \cdots \geq n_N} \detjk \left[a_{n_j+i-j} \right] (\det\left[U \right])^{n_N} \chi_{(n_1-n_N, n_2-n_N,\dots, n_N-n_N)}\left( U \right).
\label{chex-}
\end{align}
In particular, if $a_n=0$ for $n<0$, one has
\begin{align}
\prod_{j=1}^N L(t_j) = \sum_{n_1 \geq \cdots \geq n_N \geq 0} \detjk \left[a_{n_j+i-j} \right] \chi_{(n_1,\dots,n_N)}\left(U \right).
\label{chex0}
\end{align}
\end{proposition}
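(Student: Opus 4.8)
The plan is to reduce the asserted character expansion to the infinite Cauchy--Binet identity of Lemma~\ref{CB} combined with the Weyl character formula \eqref{WCF}. First I would multiply the symmetric product $\prod_{j=1}^N L(t_j)$ by the Vandermonde determinant $\Delta_N(t)=\detjk[t_j^{N-k}]$ from \eqref{Vdet} and absorb the $j$-th factor $L(t_j)$ into the $j$-th row of that determinant, which gives
\[
\Delta_N(t)\prod_{j=1}^N L(t_j)=\detjk\!\left[L(t_j)\,t_j^{N-k}\right].
\]
Expanding the Laurent series \eqref{Lau} in each entry and reindexing the powers by $p=m+N-k$, the $(j,k)$ entry becomes $\sum_{p}a_{p-N+k}\,t_j^{p}$, a single absolutely convergent sum since $|t_j|=1$ for unitary $U$ and $\sum_n|a_n|<\infty$.

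Next I would invoke Lemma~\ref{CB} read from right to left, with the identifications $a_{jm}=t_j^{m}$ and $b_{km}=a_{m-N+k}$, so that $\sum_m a_{jm}b_{km}$ reproduces exactly the entry above. This rewrites the determinant as
\[
\sum_{m_1>\cdots>m_N}\detjk\!\left[t_j^{m_k}\right]\detjk\!\left[a_{m_k-N+j}\right],
\]
where absolute convergence of \eqref{Lau} on $|t|=1$ is precisely what validates the hypotheses of Lemma~\ref{CB}. I would then substitute $m_k=n_k+N-k$, which converts the strict chain $m_1>\cdots>m_N$ into the weak ordering $n_1\geq\cdots\geq n_N$, turns the coefficient determinant into $\detjk[a_{n_k+j-k}]$ (equal, after transposition, to the determinant $\detjk[a_{n_j+k-j}]$ appearing in the statement), and, by factoring the common power $(\det U)^{n_N}=\prod_j t_j^{n_N}$ out of each row to make all parts non-negative, yields $\detjk[t_j^{m_k}]=(\det U)^{n_N}\Delta_N(t)\,\chi_{(n_1-n_N,\dots,n_N-n_N)}(U)$ via \eqref{WCF}. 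Cancelling the common factor $\Delta_N(t)$, which is legitimate for distinct eigenvalues and extends to coinciding $t_j$ by continuity of both sides, produces \eqref{chex-}.

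For the specialisation \eqref{chex0} I would argue that when $a_n=0$ for $n<0$ every term with $n_N<0$ drops out: in the coefficient determinant $\detjk[a_{n_k+j-k}]$ the column $k=N$ has entries $a_{n_N+j-N}$ whose indices are all $\leq n_N<0$, so that column vanishes identically and annihilates the determinant. Only $n_N\geq0$ then survives, and the one-line identity $(\det U)^{n_N}\chi_{(n_1-n_N,\dots,n_N-n_N)}(U)=\chi_{(n_1,\dots,n_N)}(U)$, again immediate from \eqref{WCF}, removes both the power of $\det U$ and the shift, leaving the stated sum over $n_1\geq\cdots\geq n_N\geq0$. The main obstacle I anticipate is analytic rather than algebraic: one must carefully verify that the multiple series obtained after expanding all $N$ Laurent factors converge absolutely, so that the reordering into the single determinant and the term-by-term application of Lemma~\ref{CB} are justified; the remainder is bookkeeping with the shift $m_k=n_k+N-k$ and the Weyl formula.
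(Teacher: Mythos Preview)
Your argument is correct and complete. Note, however, that the paper does not actually prove Proposition~\ref{Theorem_char_exp}: it explicitly states the result and refers to \cite{Baha} for the proof and further discussion. Your route---multiplying by the Vandermonde, absorbing $L(t_j)$ into rows, applying Lemma~\ref{CB} in reverse with $a_{jm}=t_j^{m}$ and $b_{km}=a_{m-N+k}$, then unwinding via the shift $m_k=n_k+N-k$ and the Weyl formula---is precisely the standard derivation in that reference, so there is nothing to contrast. Your handling of the specialisation \eqref{chex0} via vanishing of the $k=N$ column when $n_N<0$ is clean, and your identification of the absolute-convergence check as the only genuine analytic point is accurate: once $\sum_n|a_n|<\infty$ is assumed, the entrywise sums in Lemma~\ref{CB} converge absolutely and the reordering is justified.
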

Below we will only need the second part of the proposition of a Taylor series. In addition we will then have convergence on the closure of the unity disc.
 
Note that 
in Proposition~\ref{Theorem_char_exp} we can replace $U \rightarrow A U$, with an arbitrary invertible $N \times N$ matrix $A \in \GL (N,\mathbb{C})$, if $L(\mu_j)$ in \eqref{Lau} still converges absolutely in term of the eigenvalues $\mu_j$ of $AU$,
\begin{align}
\prod_{j=1}^N L(\mu_j) = \sum_{n_1 \geq \cdots \geq n_N} \detjk \left[a_{n_j+i-j} \right] \det\left[AU \right]^{n_N} \chi_{(n_1-n_N, n_2-n_N,\cdots, 0)}\left( A U \right).
\label{ce2}
\end{align}

\section{Proof of Main Results}\label{proofs}

In this section we will present the proofs of the main results and some consistency checks in the order of appearance in Section \ref{intro}, that is Theorem~\ref{Thm-IS} followed by Corollary~\ref{Cor-TW-IS}, Theorem~\ref{Thm-GFH} and finally Corollary~\ref{cor:JIS}. In this last subsection we also present Corollary~\ref{cor:JFH}, the corresponding statement for the generalised Fisher-Hartwig integral.

\subsection{Proof of Theorem~\ref{Thm-IS}}

We begin with the character expansion of the generating function, first without an external matrix. 
Let $(t_1,\dots,t_N)$ be the eigenvalues of $U$, then we get
\begin{align}
\det [1+ U]^\alpha =\prod_{j=1}^N (1+t_j)^\alpha ,
\label{ut}
\end{align}
and similarly for $(1+U^\dagger)^\alpha$.

In order to get the needed form we use the binomial series
\begin{align}
(1+t)^\alpha =\sum_{m=0}^\infty \binom{\alpha}{m}t^m ,
\label{biS}
\end{align}
see e.g. \cite{Abramowitz}[15.1.1 and 15.1.8], which only has non negative coefficients. It converges absolutely for $|t|<1$ and for $|t|=1$ if $\Re (\alpha ) > 0$, with the binomial coefficient defined in \eqref{binomDef0}. In the case that $\Re (\alpha ) > -1$, it still converges, except for $t=-1$, and the sum terminates if $\alpha$ is a non negative integer. Since we need absolute convergence, we restrict ourselves to the case $\Re (\alpha) > 0 $ first, and get the general solution by analytic continuation. Therefore, with \eqref{biS} we get
\begin{align}
\det [1+ U]^\alpha = \prod_{j=1}^N 
\sum_{\ell_j=0}^\infty \binom{\alpha}{\ell_j}t_j^{\ell_j}.
\end{align}
Therefore, we can apply Proposition~\ref{Theorem_char_exp} (for $\Re (\alpha) > 0$) to get the following character expansion
\begin{align}
\det [1+ U]^\alpha =\sum_r\detjk \left[ \binom{\alpha}{n_k+j-k}\right] \chi_{(n_1,...,n_N)}(U), 
\label{exp}
\end{align}
where here and in the following we will abbreviate by $\sum_r:= \sum_{n_1\geq \cdots \geq n_N \geq 0 } $ the sum over irreducible representations. 
This series is absolutely convergent, due to the absolute convergence of the binomial series for $\Re ( \alpha ) >0$. When switching to $U^\dag$ and multiplying with a matrix $D\in\GL(N,\mathbb{C})$, 
\begin{align}
\det [1+ U^\dagger D]^\alpha =\sum_{n_1\geq \cdots \geq n_N \geq 0 } \detjk \left[ \binom{\alpha}{n_k+j-k}\right] \chi_{r}(U^\dagger D),
\label{det+char}
\end{align}
with the restriction that the eigenvalues $\mu_j$ of $U^\dag D$ satisfy $|\mu_j|< 1$, $j=1,\ldots,N$, or equivalently $||D||<1$ because $U$ is unitary and the norm sub-multiplicative. This is to avoid poles when continuing in $\alpha$. 
In contrast, for the exponential series $\exp (x)=\sum_{n=0}^\infty \frac{x^n}{n!} $ Proposition~\ref{Theorem_char_exp} reads 
\begin{align}
\exp \left( \Trace [AU] \right) =\sum_r \detjk \left[ \frac{1}{\Gamma(n_k+j-k+1)} \right] \chi_r(AU),
\label{expchar}
\end{align}
without restriction on the eigenvalues of $AU$ for convergence.
We can now prove \eqref{1IS}. Inserting the expansions \eqref{det+char} and \eqref{expchar} inside the integral we have
\begin{align}
  Z_\alpha^{\rm IS}\left( A,D \right) &= \int dU_N \sum_{r,r'} \detjk \left[ \frac{1}{\Gamma(n_k+j-k+1)} \right] \detjk \left[ \binom{\alpha}{n'_k+j-k} \right] 
  \chi_r(AU)\chi_{r'}(U^\dagger D)\nonumber\\
&= \sum_{r} \detjk \left[ \frac{1}{\Gamma (n_k+j-k+1)} \right] \detjk \left[ \binom{\alpha}{n_k+j-k} \right] \frac{\chi_r(AD)}{d_r},
\end{align}
due to the orthogonality \eqref{MatOg}. This expression  only depends of the eigenvalues of the product $AD$ for which there are no restrictions, as $A$ is unrestricted on the compact domain of integration of $U(N)$. 
Thus we can always rescale $A$ such that $D$ satisfies $||D||<1$.

Before we can use the Cauchy--Binet identity Lemma~\ref{CB}, to get rid of the sum over representations, we have to cancel the dimension $d_r$ in the numerator. By using Corollary~\ref{CorolaDimR}, \eqref{dimensionR} and \eqref{DrSub1}, we obtain
\begin{align}
Z_\alpha^{\rm IS}\left( A,D\right) &= \sum_{r} \left( \prod_{j=1}^N \frac{(N-j)!}{(n_j+N-j)!}\right) \detjk \left[ \binom{\alpha}{n_k+j-k} \right] \chi_r(AD)\nonumber\\ 
&= \sum_{m_1 > \cdots > m_N \geq 0} \left( \prod_{j=1}^{N} (N-j)!\right) \detjk \left[ \binom{\alpha}{m_k+j-N} \frac{1}{m_k!} \right] \frac{\detjk \left[ \mu_j^{ 2m_k} \right]}{\Delta_N(\mu^2)} \nonumber \\
&= \detjk \left[\sum_{m=0}^\infty \binom{\alpha}{m+j-N} \frac{\mu_k^{ 2m} }{m!} \right] \frac{ \prod_{j=1}^{N} (N-j)!}{\Delta_N(\mu^2)}.
\label{Test111}
\end{align}
In the second step we have used the Weyl character formula \eqref{WCF}, in terms of the eigenvalues $\mu_j^2$, of $AD$ for $j=1,\ldots,N$,  relabelled $m_j=n_j+N-j$, and multiplied the factors $1/m_j!$ inside the determinant. In the last step we applied Lemma~\ref{CB}.

Finally, we have to show that the sum inside the determinant is proportional to Kummer's confluent hypergeometric function \eqref{1F1DEF}.
Since $\binom{\alpha}{m+j-N} =0$ if $m+j-N < 0$, the sum actually starts at $m=N-j$ for $j=1,\dots,N$, and by shifting the summation index back to $0$, we get
\begin{align}
  \sum_{m=N-j}^\infty \binom{\alpha}{m+j-N} \frac{1}{m!} \mu_k^{2m} &=  \sum_{m=0}^\infty \binom{\alpha}{m} \frac{1}{(m+N-j)!} \mu_k^{2(m +N -j)}\nonumber\\
&= \sum_{m=0}^\infty \binom{m-\alpha-1}{m} \frac{(-1)^m}{(m+N-j)!}  \mu_k^{ 2(m +N -j)} \nonumber\\
&=  \sum_{m=0}^\infty \frac{\Gamma \left( m-\alpha \right)(-1)^m }{ \Gamma \left(-\alpha \right) m! (m+N-j)!} \mu_k^{ 2(m +N -j)}\nonumber\\ 
&= {{}_1F_1} (-\alpha ; N-j+1 ; - \mu_k^2) \frac{\mu_k^{2(N-j)}}{(N-j)!}.
\label{1F1power}
\end{align}
In the second step we used \eqref{bc-}. Then, we wrote out the binomial coefficient with Gamma-functions from \eqref{binomDef}, and identified the hypergeometric function from \eqref{1F1DEF}. Taking the remaining factorial out of the determinant we arrive at \eqref{1IS}.

Below we will need a second, equivalent representation, where the first argument of ${}_1F_1$ gets shifted, $-\alpha\to-\alpha-j+1$. Without going much into detail, this can be seen as follows. A similar strategy will be applied in other proofs below. We can also rewrite the determinant in the first line of \eqref{Test111}, using Corollary~\ref{CorolaDimR}, \eqref{DrSub2} and \eqref{DrSub1}:
\begin{equation}
    \detjk \left[ \binom{\alpha}{n_k+j-k}\right] = \detjk \left[ \frac{1}{\Gamma (n_k+j-k+1)}\right] \prod_{j=1}^N \frac{\Gamma (\alpha +N-j+1)}{\Gamma(\alpha - n_j + j)}.
    \label{Sub2-Sub1}
\end{equation}
Using the same steps, we arrive at
\begin{align}
Z_\alpha^{\rm IS}\left( A,D\right) 
&= \detjk \left[\sum_{m=0}^\infty \frac{1}{\Gamma(\alpha-m+N)\Gamma(m-N+j+1)}\frac{\mu_k^{ 2m} }{m!} \right] \frac{ \prod_{j=1}^{N} (N-j)!\Gamma(\alpha+j)}{\Delta_N(\mu^2)},
\end{align}
where in the last factor in the product we have shifted the index $N-j\to j-1$. Pulling all factors inside the determinant, we have to derive the claimed hypergeometric function inside. First, also here the summation 
actually starts at $m=N-j$, and relabelling we obtain
\begin{align}
\sum_{m=0}^\infty \frac{\Gamma(\alpha+j)\Gamma(N-j+1)}{\Gamma(\alpha-m+1)\Gamma(m+n-j+1)} \frac{\mu_k^{2(m +N -j)}}{m!}
&= {{}_1F_1} (-\alpha -j+1; N-j+1 ; - \mu_k^2).
\end{align}
The last step follows from the definition \eqref{1F1DEF} and the identity
\begin{align}
\frac{\Gamma(\alpha+j)}{\Gamma(\alpha+j-m)}=(-1)^m\frac{\Gamma(-\alpha-j+1+m)}{\Gamma(-\alpha-j+1)}, 
\label{GammaId}
\end{align}
which can easily be shown  along the lines of  the proof of \eqref{bc-}. Consequently, we also have that in Theorem~\ref{Thm-IS}, \eqref{1IS} can be written as 
\begin{eqnarray}
Z_\alpha^{\rm IS}( A, D ) 
&=& 
\frac{ \detjk\left[ _1F_1 (-\alpha-j+1 ; N-j+1 ; - \mu^2_k)\, \mu_k^{2(N-j)} \right] 
}{\Delta_N(\mu^2)}.
\label{1ISalt}
\end{eqnarray}
We expect, that the equivalence to \eqref{1IS} could also be shown using recurrence relations for ${}_1F_1$, together with the invariance of the determinant under row and column operations.

Let us discuss now the analytic continuation to $\alpha\in\mathbb{C}$ arbitrary. On the left-hand side in \eqref{1IS} the integral always exists for $||D||<1$, as there are no poles then and the integration domain is compact. On the right-hand side, for both equivalent forms \eqref{1IS} and \eqref{1ISalt}  the hypergeometric function \eqref{1F1DEF} is analytic and thus can be continued, due to $b=N-j+1\geq 1$ in the definition \eqref{1F1DEF}.\\


The proof of the second, two-fold integral \eqref{2IS} is not more difficult now. We simply have to use the orthogonality \eqref{MatOg} twice and cancel two dimensions $d_r$, by applying Corollary~\ref{CorolaDimR} twice. Let us only present the main steps here. Assuming $||C||,||D||<1$ and $\Re(\alpha)>0$, 
we can insert the expansions \eqref{det+char} and \eqref{expchar} into \eqref{2IS}: 
\begin{align}
  Z_\alpha^{\rm IS}\left( A,B,C,D \right) &= \int dU_N \int dV_N\sum_{r,r'} \detjk \left[ \frac{1}{\Gamma(n_k+j-k+1)} \right] \detjk \left[ \binom{\alpha}{n'_k+j-k} \right] \nonumber\\
  &\quad\times
  \chi_r(AUBV)\chi_{r'}(V^\dag CU^\dagger D)\nonumber\\
&= \int dU_N \sum_{r} \detjk \left[ \frac{1}{\Gamma (n_k+j-k+1)} \right] \detjk \left[ \binom{\alpha}{n_k+j-k} \right] \frac{\chi_r(AUBCU^\dag D)}{d_r},\nonumber\\
&=\sum_{r} \detjk \left[ \frac{1}{\Gamma (n_k+j-k+1)} \right] \detjk \left[ \binom{\alpha}{n_k+j-k} \right] \frac{\chi_r(AD) \chi_r(BC)}{d_r^2},
\label{2ISstep1}
\end{align}
where 
\begin{align}
  \int dU_N \chi_r(AUB CU^\dagger D ) = \frac{1}{d_r}\chi_r (AD) \chi_r (BC) 
  \label{1char}
\end{align}
follows directly from \eqref{shurOg}. Again, we initially have that the eigenvalues of 
$V^\dag CU^\dagger D$ have to have absolute values smaller than unity, or equivalently $||C||,||D||<1$. Applying the Weyl character formula \eqref{WCF}, and both identities from Corollary~\ref{CorolaDimR}, we arrive at 
\begin{align}
  Z_\alpha^{\rm IS}\left( A,B,C,D \right) &=\sum_r \left( \prod_{j=1}^{N} \frac{(N-j)!^2\Gamma(\alpha+N-j+1)}{(n_j+N-j)!^2\Gamma(\alpha-n_j+j)}\right) \frac{\detjk \left[ \mu_j^{ 2(n_k+N-k)} \right]\detjk \left[ \nu_j^{ 2(n_k+N-k)} \right]}{\Delta_N(\mu^2)\Delta_N(\nu^2)} 
\nonumber\\  
  &=\frac{\prod_{j=1}^N(N-j)!^2\Gamma(\alpha+N-j+1)}{\Delta_N(\mu^2)\Delta_N(\nu^2)}
  \detjk\left[\sum_{m=0}^\infty\frac{(\mu_j\nu_k)^{2m}}{\Gamma(\alpha+N-m)m!^2}\right],
  \label{pre2IS}
\end{align}  
where $\mu_j^2$ are the eigenvalues of $AD$, $\nu_j^2$ the eigenvalues of $BC$, for $j=1,\ldots,N$, and the Cauchy--Binet identity has been applied. 
Because $A$ and $B$ are unrestricted on the compact domains of integration for $U$ and $V$,  there are no restrictions on the absolute values of the eigenvalues. 
Thus we can rescale $C$ and $D$ to achieve $||C||,||D||<1$. It remains to show that the matrix inside the determinant again yields ${}_1F_1$. Applying \eqref{GammaId} with $j\to N$, we can rewrite the sum inside the determinant in \eqref{pre2IS} as 
\begin{align}
\frac{1}{\Gamma(\alpha+N)}\sum_{m=0}^\infty \frac{\Gamma(\alpha+N)\Gamma(1)}{\Gamma(\alpha+N-m)\Gamma(m+1)}\frac{z^m}{m!}
&=\frac{1}{\Gamma(\alpha+N)}\sum_{m=0}^\infty \frac{\Gamma(-\alpha-N+m+1)\Gamma(1)}{\Gamma(-\alpha-N+1)\Gamma(m+1)}\frac{(-z)^m}{m!}
\nonumber\\
&=\frac{1}{\Gamma(\alpha+N)}{}_1F_1(-\alpha-N+1;1;-z).
\end{align}
Collecting all the factors we thus arrive at \eqref{2IS}. The continuation to general $\alpha\in\mathbb{C}$ goes as before as due to $||C||,||D||<1$ there are no poles in the integrand, and in the hypergeometric function \eqref{1F1DEF} we have $b=1$ on the right-hand side.

\subsection{Proof of Corollary \ref{Cor-TW-IS} and consistency checks}

In this subsection we will present two proofs of Corollary~\ref{Cor-TW-IS}. The simplest and shortest way is to use the Andr\'ei\'ef integral Lemma~\ref{LemmaAii} from Appendix \ref{appB}. It is shown there how to derive the known, standard Fisher-Hartwig integral, and we will follow this derivation closely. The second way is to consider the degenerate limit $D=\pm\mathbf{1}_N$ and $A=b\mathbf{1}_N$ in \eqref{1IS} using l'H\^opital, which serves as a consistency check as well. In the same manner we then  check the reduction of \eqref{2IS} to \eqref{1IS}, by setting $B=C=\mathbf{1}_N$ and taking the degenerate limit in the same way.

\begin{proof}
Following the diagonalisation of the unitary matrix $U\in U(N)$ as in Appendix \ref{appB}, we can write \eqref{1IS'} in terms of the eigenvalues $t_j=e^{i\theta_j}$, $j=1,\ldots,N$, of $U$ as
\begin{align}
Z_{\alpha,\pm}^{\rm IS}(b)&= 
\frac{1}{(2\pi)^N N!}\prod_{l=1}^N\int_0^{2\pi}d\theta_l \left(1\pm t_l^{-1}\right)^\alpha\exp[bt_l] 
\detjk\left[t_j^{k-1}\right]\detjk\left[t_j^{-k+1}\right] \nonumber\\
&=\frac{1}{(2\pi)^N}\detjk\left[ I_{\alpha,\pm,b}^{\rm IS}(j-k)\right].
\end{align}
Here, it is clear that we have to restrict to $\Re(\alpha)>-1$ to avoid poles, and we have defined
\begin{align}
I_{\alpha,\pm,b}^{\rm IS}(m)&:=\int_0^{2\pi}d\theta \left(1\pm e^{-i\theta}\right)^\alpha\exp[be^{i\theta}]\exp[im\theta]
=\int_0^{2\pi}d\theta\sum_{n,l=0}^\infty\binom{\alpha}{n}(\pm t)^{-n}\frac{(bt)^l}{l!}t^m
\nonumber\\
&=(\pm1)^m\sum_{l=0}^\infty \binom{\alpha}{m+l}\frac{(\pm b)^l}{l!}\nonumber\\
&= (\pm1)^m\binom{\alpha}{m}{}_1F_1 (-\alpha+m;m+1;\mp b).
\label{IISfinal}
\end{align}
 In the first step above we used the expansion \eqref{biS}, with $t=e^{i\theta}$, and then applied Cauchy's theorem for the integral over the unit circle as in Appendix \ref{appB}. In the last step we utilise the identity \eqref{GammaId} and the definition \eqref{1F1DEF}.  Inserting this expression with $m=j-k$ above leads to \eqref{1IS'}, after taking out and cancelling the signs $(-1)^j$ and $(-1)^k$ from rows respectively columns. Note that from the second line in \eqref{IISfinal} is is clear, that also for $j-k<0$ negative, the matrix elements $I_{\alpha,\pm,b}^{\rm IS}(j-k)$ are non-vanishing. While the right-hand side could be continued to arbitrary values of $\alpha\in\mathbb{C}$, due to the analyticity of Kummer's hypergeometric function at least for $b>0$, the integral on the left-hand side ceases to exist for $\Re(\alpha)\leq -1$.
 
The second proof, and at the same times consistency check, is obtained by taking the degenerate limit $D=\pm\mathbf{1}_N$ and $A=b\mathbf{1}_N$ in \eqref{1IS}, or simply $\mu_k^2\to \pm b$ for all $k=1,\ldots,N$. Using l'H\^opital, we have 
 \begin{align}
 Z_{\alpha,\pm}^{\rm IS}(b)&= 
\lim_{\mu^2_{j=1,\ldots,N}\to\pm b}Z_\alpha^{\rm IS}( A, D ) \nonumber\\
&=\left. 
\frac{ \prod_{l=1}^N\frac{d^{N-l}}{d(\mu_l^2)^{N-l}}\detjk\left[ _1F_1 (-\alpha-j+1 ; N-j+1 ; - \mu^2_k)\, \mu_k^{2(N-j)} \right] 
}{\prod_{l=1}^N\frac{d^{N-l}}{d(\mu_l^2)^{N-l}}\Delta_N(\mu^2)}\right|_{\mu^2_{j=1,\ldots,N}=\pm b}.
\label{lim1IS}
\end{align}
For the derivatives we may write  
\begin{equation}
\frac{d^n}{dt^n} t^m=\frac{m!}{\Gamma(m-n+1)}t^{m-n},
\label{dntm}
\end{equation} 
which vanishes for $m<n$ as it should. The derivatives can be taken into the rows of both determinants, and we obtain for the Vandermonde determinant
\begin{align}
\detjk \left[\left. \frac{d^{N-j}}{d (\mu_j^2)^{N-j}} \mu_j^{2(N-k)} \right|_{\mu_j^2=\pm b} \right] = \prod_{j=1}^N \left( \left. \frac{d^{N-j}}{d (\mu_j^2)^{N-j}} \mu_j^{2(N-j)} \right|_{\mu_j^2=\pm b} \right) = \prod_{j=1}^N  (N-j)! 
\label{dr3deno}
\end{align} 
Because of \eqref{dntm} the differentiation leads to an upper triangular matrix inside the determinant, and thus to  the product of the diagonal elements.\footnote{In the same way \eqref{dimensionR} can be obtained from \eqref{WCF} by taking the degenerate limit $U\to\mathbf{1}^N$, $t_j\to1$ $\forall j$, to the identity.}
For the differentiation in the numerator we go back to \eqref{1F1power}, with shifted index $m\to m+N-j$: 
\begin{align}
{{}_1F_1} (-\alpha ; N-j+1 ; - \mu_k^2) \mu_k^{2(N-j)}
=  \sum_{m=0}^\infty \frac{\Gamma(-\alpha+m-N+j)\Gamma(N-j+1)(-1)^{N-j} }{ \Gamma (-\alpha) \Gamma(m-N+j+1)} \frac{(-\mu_k^2)^{ m}}{m!}. 
\end{align}
Thus after differentiation using \eqref{dntm}, and shifting again indices to $m-N+k$,  we obtain
\begin{align}
&\frac{d^{N-k}}{d (\mu_k^2)^{N-k}} {{}_1F_1} (-\alpha ; N-j+1 ; - \mu_k^2) \mu_k^{2(N-j)}|_{\mu_k^2=\pm b}
\nonumber\\
&= \sum_{m=0}^\infty \frac{\Gamma(-\alpha+m-N+j)\Gamma(N-j+1)(-1)^{N-j+m} (\pm b)^{ m-N+k}}{ \Gamma (-\alpha) \Gamma(m-N+j+1)\Gamma(m-N+k+1)}
\nonumber\\
&= (N-j)!\sum_{m=0}^\infty \frac{\Gamma(-\alpha+m-k+j)(-1)^{j-k} (\mp b)^{ m}}{ \Gamma (-\alpha) \Gamma(m+j-k+1)\Gamma(m+1)}
\nonumber\\
&=(N-j)! \frac{(-1)^{j-k}\Gamma(-\alpha+j-k)}{\Gamma(-\alpha)(j-k)!}{{}_1F_1} (-\alpha +j-k; j-k+1 ; \mp b^2). 
\end{align}
While the first factor in the last line cancels with the one from the Vandermonde determinant in the denominator, when taken out of the determinant, we can use again the identity \eqref{bc-}, to obtain the binomial times the desired hypergeometric function inside the determinant in \eqref{1IS'}. 
 \end{proof}

As a further and final consistency check of Theorem~\ref{Thm-IS}, we would like to prove that \eqref{2IS} reduces to \eqref{1IS}, when setting $B=C=\mathbf{1}_N$. We can then use the invariance of the Haar measure $dU_N$ under the transformation $U\to UV^\dag$, in order to absorb the $V$-dependence of the integrand completely. Upon integrating out $V$ now, which gives unity as $\int dV_N=1$, we should arrive at \eqref{1IS}. In order to check this, we have to take the degenerate limit for the eigenvalues of $BC$,  $\nu^2_k\to1$, for $k=1,\ldots, N$, as in \eqref{lim1IS} for the one-matrix integral. Because we have already computed the derivative of the Vandermonde determinant, we can be brief. We only need the following formula for the $n$-fold derivative of the hypergeometric function ${}_1F_1$, see 
\cite{NIST}[13.3.16]:
\begin{align}
\frac{d^n}{dz^n}{}_1F_1(a;b;z)= \frac{\Gamma(a+n)\Gamma(b)}{\Gamma(a)\Gamma(b+n)}{}_1F_1(a+n;b+n;z).
\label{dn1F1}
\end{align}
Pulling the derivatives from the rule of l'H\^opital inside the determinants as before, we thus obtain
\begin{align}
&\left. 
\frac{ 
\detjk\left[ \frac{d^{N-k}}{d(\nu_k^2)^{N-k}}
{}_1F_1 (-\alpha-N+1 ; 1 ; - \mu^2_j\nu_k^2)\right]
}{\prod_{l=1}^N\frac{d^{N-l}}{d(\nu_l^2)^{N-l}}\Delta_N(\nu^2)}\right|_{\nu^2_{j=1,\ldots,N}=1}\nonumber\\
&=\frac{\detjk\left[\frac{\Gamma(-\alpha-k+1)\Gamma(1)(-\mu_j^2)^{(N-k)}}{\Gamma(-\alpha-N+1)\Gamma(1+N-k)} {}_1F_1 (-\alpha-N+1+N-k ; 1+N-k ; - \mu^2_j)\right]}{\prod_{j=1}^N(N-j)!}.
\end{align}
Using again the identity \eqref{GammaId} with $j=N$ and $m=N-k$, the resulting ratio of Gamma-functions can be taken out of the determinant. Together with the factor form the Vandermonde determinant  they cancel the pre-factor in \eqref{2IS}, that equivalently reads 
$g_N(\alpha)=\prod_{j=1}^N\frac{(N-j)!^2\Gamma(\alpha+j)}{\Gamma(\alpha+N)}$. Thus we arrive at the claimed identity, this time in the form \eqref{1ISalt} equivalent to \eqref{1IS}.

\subsection{Proof of Theorem~\ref{Thm-GFH}}

\begin{proof}
This proof is similar to the proof of Theorem~\ref{Thm-IS}, and we begin with proving \eqref{1FH}. We start again with the character expansion \eqref{det+char}, which requires that the eigenvalues of $A$ and $D$ lie inside the unit circle, $||A||,||D||<1$, and $\Re(\alpha),\Re(\beta)>0$. Inserting \eqref{det+char} into the integral \eqref{1FH}, due to the orthogonality \eqref{MatOg} we obtain
\begin{eqnarray}
Z_{\alpha,\beta}^{\rm FH}( A, D ) &=& \sum_{r ,r'}\int  dU_N \detjk \left[ \binom{\alpha}{n_k+j-k}\right] \detjk \left[ \binom{\beta}{n'_k+j-k}\right]\chi_{r}(A U)  \chi_{r'}(U^\dagger D) 
\nonumber\\
&=& \sum_{r} \detjk \left[ \binom{\alpha}{n_k+j-k}\right] \detjk \left[ \binom{\beta}{n_k+j-k}\right] \frac{\chi_{r}(A D)}{d_r} .
\label{proof1FHstart}
\end{eqnarray}
We can now use Corollary~\ref{CorolaDimR}, namely \eqref{DrSub2} to cancel one of the determinants with the $d_r$ in the denominator, as in the previous proof. However, to keep the symmetry under the  interchange of $\alpha$ and $\beta$ manifest, we also rewrite the other determinant as in \eqref{Sub2-Sub1}, with $\alpha\to\beta$.
Therefore, by applying \eqref{DrSub2} and \eqref{Sub2-Sub1} we have
\begin{align}
Z_{\alpha,\beta}^{\rm FH}(A,D)&= \sum_{r}  \detjk \left[ \frac{1}{\Gamma (n_k+j-k+1)}\right] \prod_{j=1}^N \frac{\Gamma (\beta +N-j+1) \Gamma (\alpha +N-j+1) (j-1)!}{\Gamma(\beta - n_j + j)\Gamma(\alpha - n_j + j)\Gamma(n_j+N-j+1)} 
\nonumber\\
&\quad \times
\chi_{r}(A D) \nonumber \\
&= \sum_{m_1> \cdots > m_N\geq 0}  \detjk \left[  \frac{\Gamma (\beta +j) \Gamma (\alpha +j) (N-j)!}{\Gamma(\beta - m_j + N)\Gamma(\alpha - m_j + N)m_j! \Gamma (m_k+j-N+1)} \right] \nonumber\\
&\quad\times
 \frac{\detjk[\mu_j^{2m_k}]}{\Delta_N(\mu^2)}\nonumber\\
 &= \detjk \left[ \sum_{m=0}^{\infty}  \frac{\Gamma (\beta +j) \Gamma (\alpha +j) (N-j)!}{\Gamma(\beta - m + N)\Gamma(\alpha - m + N)m! \Gamma (m+j-N+1)} \mu_k^{2m} \right] \frac{1}{\Delta_N(\mu^2)}.
\end{align}
In the second equality we changed the summation to $m_j=n_j+N-j$, reordered the product, letting $j\to N-j$, and applied Weyl's character formula \eqref{WCF} in terms of the eigenvalues $\mu^2_j$ of $AD$, with $|\mu_j|<1$  for $j=1,\dots,N$. The Cauchy--Binet identity Lemma~\ref{CB} leads to the final line.

To obtain \eqref{1FH} we only need to show that the sum inside the determinant is Gau\ss'  hypergeometric function \eqref{2F1Def}. First, since $\frac{1}{\Gamma(m+j-N+1)}=0$ if $m+j-N<0$, we shift the summation index back to start with 0. Second, we use \eqref{GammaId} for the $\alpha$- and $\beta$-dependent part with $N=j$. Put together we have
\begin{eqnarray}
& &\sum_{m=N-j}^{\infty}  \frac{\Gamma (\beta +j) \Gamma (\alpha +j) (N-j)!}{\Gamma(\beta - m + N)\Gamma(\alpha - m + N)m! \Gamma (m+j-N+1)} \mu_k^{2m} \nonumber \\
&=& \sum_{m=0}^{\infty}\frac{\Gamma (\beta +j) \Gamma (\alpha +j) (N-j)!}{\Gamma(\beta - m + j)\Gamma(\alpha - m + j) \Gamma(m+N-j+1)! } \frac{\mu_k^{2m+2(N-j)}}{m!} \nonumber \\
&=& \sum_{m=0}^{\infty}  \frac{\Gamma (-\beta  -j+1+m) \Gamma (-\alpha -j+1+m) (N-j)!}{\Gamma(\beta -j+1)\Gamma(\alpha -j+1) \Gamma(m+N-j+1)! } \frac{\mu_k^{2m+2(N-j)}}{m!} \nonumber \\ 
&=& {{}_2F_1}(-\alpha-j+1,-\beta -j +1; N-j+1; \mu^2_k)\mu_k^{2(N-j)}.
\end{eqnarray}
The last step directly follows from 
\eqref{2F1Def} and gives the claim. It is manifestly symmetric in $\alpha$ and $\beta$.

Had we only expanded one of the determinants in \eqref{proof1FHstart} using Corollary~\ref{CorolaDimR}, we would arrive at an equivalent result, that is not explicitly symmetric in $\alpha$ and $\beta$. Leaving the details to the reader, we only quote the final answer for this equivalent form of \eqref{1FH} in Theorem~\ref{Thm-GFH}:
\begin{eqnarray}
Z_{\alpha,\beta}^{\rm FH}( A, D ) 
&=& 
\frac{ \detjk\left[ _2F_1 (-\alpha-j+1;-\beta ; N-j+1 ; \mu_k^2)\, \mu_k^{2(N-j)} \right]
}{\Delta_N(\mu^2)}.
\label{1FH'}
\end{eqnarray}
Yet another form exists with $\alpha\leftrightarrow\beta$ exchanged and the first and second argument of ${}_2F_1$ exchanged. 
Once again we expect that the equality between \eqref{1FH} and \eqref{1FH'}  (or the third form) can shown independently, using recurrence relations for Gau{\ss}'s hypergeometric function, together with row and column operations under the determinant.
The continuation in $\alpha$ and $\beta$ goes as follows. In the integrals on the right-hand side there are no poles for $||A||,||D||<1$, and thus the integral is well-defined for $\alpha,\beta\in\mathbb{C}$. Without restriction, however, we have to impose at least $\Re(\alpha+\beta)>-1$ as in the original Fisher-Hartwig integral \eqref{FHdef}. 
On the right-hand side, due to $c=N-j+1\geq1$ in the definition \eqref{2F1Def} in all equivalent forms \eqref{1FH} and \eqref{1FH'} we can continue in both arguments in $\alpha$ and $\beta$.\\ 

It remains to prove the second statement in the theorem, the two-matrix integral \eqref{2FH}. 
Since $AUB $, $C U^\dagger D \in \GL(N,\mathbb{C})$, assuming that the eigenvalues of $A,B,C$ and $D$  lie inside the unit circle,
we can proceed as in the previous case \eqref{2ISstep1} when $\Re(\alpha),\Re(\beta)>0$. Using \eqref{1char} we obtain from the first and second group integral
\begin{align}
  Z_{\alpha, \beta }^{FH} \left( A, B , C , D \right) &=\int dU_N \sum_{r}  \frac{\chi_r(AUBC U^\dagger D )}{d_r}  
  \detjk \left[ \binom{\alpha}{n_k+j-k}\right] \detjk \left[ \binom{\beta}{n_k+j-k}\right] \nonumber\\
&= \sum_{r} \detjk \left[ \binom{\alpha}{n_k+j-k}\right] \detjk \left[ \binom{\beta}{n_k+j-k}\right] 
 \frac{\chi_r (AD) \chi_r (BC) }{d_r^2} .
\end{align}
It only depends on the eigenvalues $\mu_j^2$ of $AD$ and the eigenvalues $\nu_j^2$ of $BC$, restricted to $|\mu_j| < 1 $ and $|\nu_j| < 1 $ for all $j=1,\dots,N$. Applying  Corollary~\ref{CorolaDimR} twice,  as well as Weyl's character formula \eqref{WCF}, we get
\begin{align}
  Z_{\alpha, \beta }^{FH} \left( A, B , C , D \right) &= \sum_{r} \prod_{j=1}^N \left( \frac{(N-j)!^2 \Gamma(\alpha+ N-j+1) \Gamma(\beta+ N-j+1)}{m_j!^2 \Gamma(\alpha -m_j+N) \Gamma(\beta-m_j+N)} \right) 
\nonumber \\ 
&\quad\times 
 \frac{\detjk \left[ \mu_j^{ 2(n_k+N-k)} \right]\detjk \left[ \nu_j^{ 2(n_k+N-k)} \right]}{\Delta_N(\mu^2)\Delta_N(\nu^2)}  
 \nonumber\\
&=  \detjk \left[ \sum_{m=0}^\infty \frac{1}{m!^2 \Gamma(\alpha -m+N) \Gamma(\beta -m+N)}  \mu_i^m \nu_j^m \right] \nonumber \\ 
&\quad\times
 \frac{ \prod_{j=1}^N \left( (N-j)!^2 \Gamma(\alpha+ N-j+1) \Gamma(\beta+ N-j+1) \right) }{\Delta_N(\mu) \Delta_N(\nu)},
 \label{4.4ZFHproto}
\end{align}
after using the Cauchy--Binet Lemma~\ref{CB} again. Finally, the sum inside the determinant can also be related to  Gau{\ss}'s hypergeometric function 
\begin{align}
&\frac{1}{\Gamma(\alpha+N)\Gamma(\beta+N)}  \sum_{m=0}^\infty \frac{\Gamma(\alpha+N)\Gamma(\beta+N)}{m!^2 \Gamma(\alpha -m+N) \Gamma(\beta -m+N)}  \mu_i^m \nu_j^m  
\nonumber\\
 &=\frac{1}{\Gamma(\alpha+N)\Gamma(\beta+N)}  
  \sum_{m=0}^\infty \frac{\Gamma(-\alpha -N+m+1)\Gamma(-\beta -N+m+1)\Gamma(1)}{\Gamma(-\alpha -N+1) \Gamma(-\beta -N+1)\Gamma(m+1)} \frac{\mu_i^m \nu_j^m}{m!} \nonumber \\ 
  &=\frac{1}{\Gamma(\alpha +N) \Gamma(\beta +N)} {}_2F_1 (-\alpha -N +1, - \beta -N+1 ; 1 ; \mu_j \nu_k) .
\end{align}
In the first step we have used the identity \eqref{GammaId} for $j=N$, and then applied the definition \eqref{2F1Def}. The remaining prefactors can be pulled out of the determinant, and together with the factors in \eqref{4.4ZFHproto} yield the result given in \eqref{2FH}. 

The continuation in $\alpha$ and $\beta$ goes in the same way as before, depending on whether  the norms of $A,B,C$ and $D$ are all less than one or not.
\end{proof}
Similarly to the end of the last subsection, there is a consistency check in setting $B=C=1$ to reobtain \eqref{1FH} from \eqref{2FH}. We leave this to the reader. 

\subsection{Proof of Corollary~\ref{cor:JIS}}

In this subsection we will prove Corollary~\ref{cor:JIS}, by integrating \eqref{2IS} with $B=C=X$ over the Hermitian random matrix $X=X^\dag$, with an appropriate measure $\exp[-\Trace v(X)]$, such that the integral is well defined. A similar result is obtained when integrating \eqref{2FH} under the same conditions, and is stated in Corollary~\ref{cor:JFH} at the end. Because the proof is completely analogous we will omit it here. 

\begin{proof}
Consider the matrix integral as given in  \eqref{JSPfaff} in Corollary~\ref{cor:JIS}. Diagonalising the random matrix $X=WxW^\dag$, with $W\in U(N)$ and $x=\mbox{diag}(x_1,\ldots,x_N)$, gives a known Jacobian, and we have 
\begin{align}
\int dH_N = c_N \int dU_N \int_{\mathbb{R}^N}dx_1\cdots dx_N \Delta_N(x)^2. 
\end{align}
Given our convention $\int dU_N=1$, the constant $c_N={\pi^{\frac{N(N-1)}{2}}}/{\prod_{j=1}^N j!}$ can be determined for example by comparing the Gaussian integral of the Gaussian Unitary Ensemble GUE with $v(x)=X^2$ over independent matrix elements with the product of the norms of the Hermite polynomials, resulting from the Andr\'ei\'ef integral applied to the squared Vandermonde. 
With these prerequisites, we have 
\begin{eqnarray}
J_{\alpha}^{\rm IS}(A,D) &=& \int dH_N e^{-\Trace v(X)}Z_{\alpha}^{\rm IS}(A,X,X,D)\nonumber\\
&=& c_N \prod_{i=1}^N\int_{\mathbb{R}}dx_ie^{-v(x_i)}\Delta_N(x)^2 \int  dU_N\!\int dV_N \det \left[1+ V^\dag xU^\dagger D\right]^\alpha \exp \left[ \Trace (AUxV) \right]
\nonumber\\
&=& c_N g_N(\alpha) \prod_{i=1}^N\int_{\mathbb{R}}dx_ie^{-v(x_i)}\Delta_N(x)^2 
\frac{ \detjk\Big[ {}_1F_1 (-\alpha -N+1 ; 1; -a_j^2x_k^2)\Big] 
}{\Delta_N(a^2)\Delta_N(x^2)},
\label{2ISint1}
\end{eqnarray}
where the right and left  invariance of the Haar measures $dU_N$ and $dV_N$, respectively, has been used to absorb the diagonalising matrix $W$ in the second line. We have inserted the result \eqref{2IS}, recalling that here we call $a_k^2$ the eigenvalues of  $AD$, and $x_k^2$ the eigenvalues of $X^2$, for $k=1,\dots,N$. We can now apply the Schur-Pfaff identity \cite{Schur}:
\begin{align}
  \frac{\Delta_N (x)^2}{\Delta_N (x^2)} = \left\{
\begin{array}{ll}
\Pf \left[ \frac{x_j -x_k }{x_j + x_k } \right]_{j,k=1}^N & ,\ N \text{ even}, \\
\Pf\left[ \begin{array}{ c | c }
\frac{x_j -x_k }{x_j + x_k }  & 1 \\ \hline
-1 & 0 \\
\end{array} \right] &,\ N \text{ odd}, \\
\end{array}
\right. 
\label{SchurPfaff}
\end{align}
quoted for even and odd $N$. For the definition of the Pfaffian determinant and its properties see e.g.  \cite{Mehta2}.  Let us first look at even $N$. We can apply an integration theorem by de Bruijn \cite{deBru}, which holds for $s(x,y) = -s(y,x)$ an anti-symmetric function, and $\phi_j(x)$ a set of functions, such that all integrals exist with respect to measure $d\mu$:
\begin{align}
 &\int d\mu(x_1) \cdots d\mu(x_N) \; \Pf \left[ s(x_j , x_k) \right]_{j,k=1}^N \detjk \left[ \phi_j (x_k) \right] \nonumber \\ 
  &= N! \Pf 
  \left[ \int d\mu( x) d\mu(y) \; s(x,y) \frac{1}{2}\left( \phi_j (x) \phi_k (y) - \phi_k (x) \phi_j (y) \right)
  \right]_{j,k=1}^N .
  \label{deBruijnE}
\end{align}
Here we have already anti-symmetrised the double integral inside the determinant, as it will be needed now. Due to the Schur-Pfaff identity \eqref{SchurPfaff} we have $s(x,y)=(x-y)/(x+y)$, which has a pole at $x=-y$, a weight 
$d\mu(x)=dx e^{-v(x)}$ on $\mathbb{R}$, and $\phi_j(x_k)= {}_1F_1 (-\alpha -N+1 ; 1; -a_j^2x_k^2)$. Collecting all prefactors we thus arrive exactly at \eqref{JSPfaff} for even $N$, together with the definition \eqref{EISajak}. Because in our case $\phi_j(-x_k)=\phi_j(x_k)$ in an even function, the pole at $x=-y$ inside the integral $E_\alpha^{\rm IS}(a_j,a_k) $ cancels. 

For $N$ odd a similar formula of de Bruijn holds:
\begin{align}
 &\int d\mu(x_1) \cdots  d\mu(x_N) \; 
\Pf\left[ \begin{array}{ c | c }
s(x_j , x_k) & 1 \\ \hline
-1 & 0 \\
\end{array} \right]_{j,k=1}^N 
 \detjk \left[ \phi_j (x_k) \right] \nonumber \\ 
  &= N! 
\Pf\left[ \begin{array}{ c | c }
\int d\mu( x) d\mu(y) \; s(x,y) \frac{1}{2}\left( \phi_j (x) \phi_k (y) - \phi_k (x) \phi_j (y) \right) & 
\int d\mu(x) \phi_j(x) \\ \hline
-\int d\mu(x) \phi_k(x) & 0 \\
\end{array} \right]_{j,k=1}^N.      
  \label{deBruijnO}
\end{align}
Because of the additional last row and column, the matrix inside the Pfaffian determinants are again of even dimension. 
In the same way as before, applying this integral identity to \eqref{2ISstep1} for odd $N$, together with \eqref{SchurPfaff} we arrive at \eqref{JSPfaff} together with \eqref{FISaj} in the case of odd $N$. 
\end{proof}

In the very same way as described above we can make $B=C=X$ random in Theorem~\ref{Thm-GFH} in \eqref{2FH}, and integrate over $X$. Because the proof goes exactly along the same lines we only state the final result. 
\begin{corollary}[Integrated Generalised Fisher-Hartwig Integral]\label{cor:JFH}
Let $A,D\in{\rm GL}(N,\mathbb{C})$ be fixed complex invertible $N\times N$ matrices and denote by $a_j^2$, $j=1,\ldots,N$, the eigenvalues of $AD$. 
The matrix $X=X^\dag$ is a complex Hermitian random matrix distributed according to $\exp[-\Trace v(X)]$, where $v$ decays sufficiently fast at infinity. Furthermore, let $\Re(\alpha+\beta)>-1$. Then the following integral identity holds:
\begin{eqnarray}
J_{\alpha,\beta}^{\rm FH}(A,D)
&:=& \int dH_N e^{-\Trace v(X)}\int dU_N \int dV_N
\det \Big[1+ AUXV\Big]^\alpha\det \left[1+ V^\dag XU^\dagger D\right]^\beta
\nonumber\\
&=&c_N\frac{N!  g_N(\alpha)g_N(\beta)}{\Delta_N(a^2)}  \left\{
\begin{array}{ll}
\Pf \left[E_{\alpha,\beta}^{\rm FH}(a_j,a_k) \right]_{j,k=1}^N &, N \text{ even}, \\[2mm]
\Pf\left[ \begin{array}{ c | c }
E_{\alpha,\beta}^{\rm FH}(a_j,a_k) & F_{\alpha,\beta}^{\rm FH}(a_j) \\\hline
-F_{\alpha,\beta}^{\rm FH}(a_k) & 0 \\
\end{array} \right]_{j,k=1}^N &, N \text{ odd}, \\
\end{array}
\right.
\label{JFHPFaff}
\end{eqnarray}
with
\begin{eqnarray}
 \label{EFHajak}
 E_{\alpha,\beta}^{\rm FH}(a_j,a_k) &:=& \frac12\int_{\Rset^2} dxdy \frac{x-y}{x+y} e^{-v(x)-v(y)} 
 \\ 
 &&\times \,\bigg( {}_2F_1 (1-\alpha-N,1-\beta-N;1;a^2_jx^2) {}_2F_1 (1-\alpha-N,1-\beta -N; 1 ; a^2_k y^2) \nonumber  \\ 
 &&\quad- {}_2F_1 (1-\alpha-N,1-\beta-N;1;a^2_kx^2) {}_2F_1 (1-\alpha-N,1-\beta -N; 1 ; a^2_j y^2)  \bigg),
\nonumber
\end{eqnarray}
and 
\begin{equation}
F_{\alpha,\beta}^{\rm FH}(a_j) := \int_{\Rset} dx  e^{-v(x)} {}_2F_1 (1-\alpha -N, 1- \beta -N ; 1 ; a^2_j x^2)  . 
 \label{FFHaj}
\end{equation}
\end{corollary}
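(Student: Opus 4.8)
The plan is to mirror the proof of Corollary~\ref{cor:JIS} almost verbatim, since the only structural change is that Kummer's ${}_1F_1$ is replaced by Gau{\ss}' ${}_2F_1$ and the prefactor of \eqref{2IS} is replaced by that of \eqref{2FH}. First I would set $B=C=X$ in the two-matrix Fisher--Hartwig integral, writing $J_{\alpha,\beta}^{\rm FH}(A,D)=\int dH_N\,e^{-\Trace v(X)}\,Z_{\alpha,\beta}^{\rm FH}(A,X,X,D)$, and diagonalise $X=WxW^\dag$ with $W\in U(N)$ and $x=\mathrm{diag}(x_1,\dots,x_N)$, using the known Jacobian $\int dH_N=c_N\int dU_N\int_{\mathbb{R}^N}dx_1\cdots dx_N\,\Delta_N(x)^2$ with $c_N=\pi^{N(N-1)/2}/\prod_{j=1}^N j!$. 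The left/right invariance of the Haar measures $dU_N$ and $dV_N$ absorbs the eigenvector matrix $W$, so that only the eigenvalues $x_k^2$ of $X^2$ enter. Inserting the closed form \eqref{2FH} with $\mu_j^2\to a_j^2$ (the eigenvalues of $AD$) and $\nu_k^2\to x_k^2$, the $U$- and $V$-integrations collapse to the determinant $\detjk[{}_2F_1(1-\alpha-N,1-\beta-N;1;a_j^2x_k^2)]$ divided by $\Delta_N(a^2)\Delta_N(x^2)$.

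The remaining object is then $\prod_i\int_{\mathbb{R}}dx_i\,e^{-v(x_i)}\,(\Delta_N(x)^2/\Delta_N(x^2))\,\detjk[{}_2F_1(\dots;a_j^2x_k^2)]$. Here I would invoke the Schur--Pfaff identity \eqref{SchurPfaff} to rewrite $\Delta_N(x)^2/\Delta_N(x^2)$ as a Pfaffian with antisymmetric kernel $s(x,y)=(x-y)/(x+y)$ (with the additional border row and column for odd $N$), and then apply de Bruijn's integration formulae \eqref{deBruijnE} and \eqref{deBruijnO} with weight $d\mu(x)=dx\,e^{-v(x)}$ and $\phi_j(x_k)={}_2F_1(1-\alpha-N,1-\beta-N;1;a_j^2x_k^2)$. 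This converts the $N$-fold integral into a single Pfaffian whose entries are exactly the antisymmetrised double integrals $E_{\alpha,\beta}^{\rm FH}(a_j,a_k)$ of \eqref{EFHajak} and, for odd $N$, the single integrals $F_{\alpha,\beta}^{\rm FH}(a_j)$ of \eqref{FFHaj} in the border, with de Bruijn contributing the overall factor $N!$. Because $\phi_j(-x)=\phi_j(x)$ is even in $x$, the pole of $s(x,y)$ at $x=-y$ cancels inside $E_{\alpha,\beta}^{\rm FH}$, exactly as in the Ingham--Siegel case, so all entries are well defined.

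The only genuine work, as for Corollary~\ref{cor:JIS}, is the bookkeeping and the convergence argument. For convergence I would note that the hypothesis that $v$ decays sufficiently fast at infinity guarantees the $x$-integrals converge against the ${}_2F_1$ factors, and that the analytic continuation to $\Re(\alpha+\beta)>-1$ is inherited directly from \eqref{2FH} as established in Theorem~\ref{Thm-GFH}. The main obstacle I anticipate is purely the constant-chasing: one must combine the prefactor $g_N(\alpha)g_N(\beta)/\prod_{j=1}^N(j-1)!^2$ from \eqref{2FH}, the Jacobian constant $c_N$, and the $N!$ produced by de Bruijn, together with the $1/\Delta_N(a^2)$, to reproduce the normalisation written in \eqref{JFHPFaff}. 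This step is delicate because $g_N(\alpha)$ itself already carries a factor $\prod_{j=1}^N(j-1)!^2$, so keeping track of which factorials come from the $g_N$ functions, which from the explicit $1/\prod_{j=1}^N(j-1)!^2$ in \eqref{2FH}, and which from the Schur--Pfaff/Vandermonde manipulation is where the calculation must be done with care.
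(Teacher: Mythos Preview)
Your proposal is correct and follows exactly the route the paper takes: the paper itself does not spell out a separate proof for Corollary~\ref{cor:JFH} but simply states that ``the proof goes exactly along the same lines'' as Corollary~\ref{cor:JIS}, with \eqref{2IS} replaced by \eqref{2FH}. Your outline reproduces precisely that parallel argument---diagonalisation of $X$, absorption of $W$ via Haar invariance, insertion of \eqref{2FH}, the Schur--Pfaff identity \eqref{SchurPfaff}, and de Bruijn's formulae \eqref{deBruijnE}--\eqref{deBruijnO}---and your flagged concern about tracking the factor $1/\prod_{j=1}^N(j-1)!^2$ from \eqref{2FH} through to the prefactor in \eqref{JFHPFaff} is indeed the only point requiring care.
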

It is possible that for certain choices of the weight function $e^{-v(x)}$ the integrals can be simplified. Even for the Gaussian case when $X$ belongs to the GUE, this does not seem to be an easy task.

\section{Conclusions and Open Questions}\label{conclusio}

In this paper we have derived explicit determinantal and Pfaffian formulas for certain integrals over the unitary group with respect to Haar measure. They depend on fixed external matrices and have been named after known matrix-integrals where such matrices are absent, namely the Fisher-Hartwig and Ingham-Siegel integral. The later was initially defined by an integration over Hermitian random matrices. Here, instead we obtained unitary one- and two-matrix versions for both kinds of integrals, with explicit determinantal formulas in terms of Kummer's respectively Gau{\ss}' hypergeometric function. 
When some of the fixed, external matrices were made random and integrated over, we obtained Pfaffian determinants. 

We expect that our generalised group integrals will find applications in physics and mathematics. Most likely, the cases $A=D$ and $B=C$ Hermitian will be encountered, but we choose to present the most general form. 
Regarding possible mathematical extensions, two questions naturally arise. First, one might wonder if higher order multi-unitary group integrals could be solved, too. Because each integration results into an addition factor $1/d_r$ from the character expansion, it will be difficult to put the result into a form where the Cauchy--Binet identity can be applied. A second obvious question is about other compact group integrals, for example over the orthogonal or unitary-symplectic group. While the character expansion remains an available tool, already for simpler integrals this has turned out to be a formidable task.

\section*{Acknowledgments}
The work of Gernot Akemann was partly funded by the Deutsche Forschungsgemeinschaft (DFG) grant SFB 1283/2 2021 -- 317210226 and the work of Tim W\"urfel was partly supported by EPSRC Grant EP/V002473/1 "Random Hessians and Jacobians: theory and applications".
We thank Yan Fyodorov for useful discussions and Mario Kieburg for very detailed comments on the manuscript.

\begin{appendix}
\section{Properties of Binomial Coefficients and Gau{\ss}' Hypergeometric Function}\label{appA}

In this appendix, we will show a property of binomial coefficients given in  
Lemma~\ref{ApendixATheorem} below. It follows from an identity for Gau{\ss}' hypergeometric function. First let us define  the binomial coefficient for complex argument,
\begin{align}
  \binom{\alpha}{k}= \frac{\alpha (\alpha -1) \cdot \dots \cdot (\alpha - k + 1)}{k!} ,\quad \text{for } \alpha \in \Cset \text{ and } k \in \Nset_0 .
\label{binomDef0}
\end{align}
We can express the binomial coefficient via the Gamma-function, by using the property $\Gamma(z+1) = z\Gamma (z)$. It then holds
\begin{align}
  \binom{\alpha}{k} = \frac{\Gamma (\alpha +1)}{\Gamma (k+1) \Gamma (\alpha - k + 1)},
\label{binomDef}
\end{align}
up to singularities if $\alpha$ is a negative integer. The identity remains true in the limit when $\alpha$ becomes a negative integer, in the l'H\^opital sense. The right-hand side of \eqref{binomDef} is also defined for integer or even complex $k\in \Cset$, if $\alpha$ is not a negative integer.

\begin{lemma}
Let $m \in \Zset$ and $\alpha , \beta \in \Cset$ with $\Re (\alpha + \beta ) > -1$, then
\begin{align}
\binom{\alpha+\beta}{\beta+m}=\sum_{k=0}^\infty \binom{\alpha}{m+k} \binom{\beta}{k}.
\label{ApendixA}
\end{align}
This statement remains true if $\alpha $ or $ \beta$ is a non negative integer and in this case the sum is finite.
\label{ApendixATheorem}
\end{lemma}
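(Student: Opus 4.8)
The plan is to recognize the right-hand side of \eqref{ApendixA} as a coefficient in a product of binomial series and to extract it via a generating-function (formal power series) argument, which is the cleanest route to a Vandermonde-type convolution identity of this kind. First I would invoke the binomial series \eqref{biS}, valid for $|t|<1$, namely $(1+t)^\alpha = \sum_{n=0}^\infty \binom{\alpha}{n} t^n$ and likewise $(1+t)^\beta = \sum_{k=0}^\infty \binom{\beta}{k} t^k$. Multiplying these two absolutely convergent series on $|t|<1$ gives $(1+t)^{\alpha+\beta} = \sum_{s=0}^\infty \binom{\alpha+\beta}{s} t^s$ directly on the left, while the Cauchy product on the right collects, for each power $t^s$, the coefficient $\sum_{k} \binom{\alpha}{s-k}\binom{\beta}{k}$. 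Comparing the coefficient of $t^{\beta+m}$ — more precisely setting $s = \beta + m$ and reindexing the inner summand so that $\binom{\alpha}{s-k} = \binom{\alpha}{m+k}$ after the substitution — would yield \eqref{ApendixA}. Since the convergence of the constituent series and their product is absolute for $|t|<1$ under $\Re(\alpha+\beta)>-1$ (or more safely, first under $\Re(\alpha),\Re(\beta)>0$), the coefficient comparison is legitimate and the rearrangement into the stated form is justified.

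The subtle point, and the main obstacle, is that the index $\beta+m$ is not in general a non-negative integer, so the naive ``coefficient of $t^{\beta+m}$'' is not literally well defined as a power-series coefficient. To handle this cleanly I would instead prove the algebraic identity in the parameter $\alpha$ (with $\beta,m$ fixed) by reducing it to the known Chu-Vandermonde / Gau\ss{} summation. Writing each binomial with Gamma-functions via \eqref{binomDef} and pulling out the factors independent of the summation index $k$, the sum $\sum_{k\ge 0} \binom{\alpha}{m+k}\binom{\beta}{k}$ becomes, after elementary Gamma-function manipulation using \eqref{GammaId}, a constant times a ${}_2F_1$ evaluated at argument $1$. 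The claim then reduces precisely to the Gau\ss{} summation theorem $\,{}_2F_1(a,b;c;1) = \Gamma(c)\Gamma(c-a-b)/\big(\Gamma(c-a)\Gamma(c-b)\big)$, which converges and holds exactly under the hypothesis $\Re(c-a-b)>0$; with the appropriate identifications of $a,b,c$ in terms of $\alpha,\beta,m$, this translates into the stated $\Re(\alpha+\beta)>-1$. Matching the resulting product of Gamma-functions against $\binom{\alpha+\beta}{\beta+m}$ via \eqref{binomDef} completes the identification.

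Finally I would dispatch the degenerate cases. When $\alpha$ (or $\beta$) is a non-negative integer, the binomial $\binom{\alpha}{m+k}$ vanishes for $m+k>\alpha$, so the series terminates and becomes a finite sum; the Gamma-function expressions acquire poles that cancel against zeros in the l'H\^opital sense already discussed after \eqref{binomDef}, so both the generating-function derivation and the Gau\ss{}-summation identity remain valid by continuity in the parameter. I expect the generating-function heuristic to provide the quickest guide to the correct bookkeeping of indices, while the rigorous justification rests on the Gau\ss{} summation theorem together with the absolute-convergence conditions, which is where care with the hypothesis $\Re(\alpha+\beta)>-1$ is essential.
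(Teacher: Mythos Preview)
Your proposal is correct and, once you discard the generating-function heuristic (whose limitation you already identify), it is essentially the same argument as the paper's: rewrite $\sum_{k\ge 0}\binom{\alpha}{m+k}\binom{\beta}{k}$ as $\binom{\alpha}{m}\,{}_2F_1(-\alpha+m,-\beta;m+1;1)$ via Gamma-function manipulations, apply the Gau\ss{} summation \eqref{2F1forz1} under $\Re(\alpha+\beta)>-1$, and match the result to $\binom{\alpha+\beta}{\beta+m}$. The paper carries this out by first establishing the identity with a free variable $z$ (your \eqref{Fid}) and then specialising to $z=1$, but the substance is identical.
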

\begin{proof}
For the proof we first want to show that
\begin{align}
\binom{\alpha}{m} {}_2F_1(-\alpha+m,-\beta;m+1;z) = \sum_{k=0}^\infty \binom{\alpha}{m+k} \binom{\beta}{k}z^k,
\label{Fid}
\end{align}
with the hypergeometric function ${}_2F_1(a,b;c;z) $ given by the Gau{\ss} 
series \eqref{2F1Def}. 
It converges absolutely on $|z|=1$ if $\Re \left( c-a-b\right)=\Re(m+1-(m-\alpha)+\beta)>0$, which is the case here,  or if $a$ or $b$ is a non negative integer, since then the sum is finite. Below we will have to set $z=1$ eventually, to prove \eqref{ApendixA}.
Let us show the relation \eqref{Fid} by using 
\begin{align}
\binom{\alpha}{k}=(-1)^k\binom{k-\alpha-1}{k}.
\label{bc-}
\end{align}
It follows by inserting the definition \eqref{binomDef0} into the right hand side,
\begin{equation}
\binom{k-\alpha-1}{k}=\frac{(k-\alpha-1)(k-\alpha-2)\cdots(-\alpha-1)(-\alpha)}{k!}= (-1)^k\binom{\alpha}{k}.
\end{equation}
and taking out the minus sign.
Multiplying the hypergeometric function with \eqref{bc-} and spelling out the definition of ${}_2F_1$ from \eqref{2F1Def}, we obtain
\begin{align}
\binom{\alpha}{m} {}_2F_1(-\alpha+m,-\beta;m+1;z) 
&=(-1)^m\binom{-\alpha+m-1}{m} \sum_{k=0}^\infty \frac{\Gamma \left( m+1 \right) \Gamma \left(-\alpha+m +k \right)\Gamma \left( -\beta+k \right)}{\Gamma \left( -\alpha+m \right)\Gamma \left( -\beta \right)\Gamma \left(m+1+k \right) }\frac{z^k}{k!}\nonumber\\
&=\sum_{k=0}^\infty (-1)^m \frac{\Gamma\left(-\alpha+m+k \right)}{\Gamma\left(-\alpha \right)\Gamma \left(m+1+k \right)}\frac{ \Gamma \left( -\beta+k \right)}{\Gamma \left( -\beta \right)\Gamma \left(k+1 \right)}z^k\nonumber\\
&= \sum_{k=0}^\infty (-1)^{m+k} \binom{-\alpha+m+k-1}{m+k} (-1)^{k}\binom{-\beta+k-1}{k} z^k \nonumber \\
&= \sum_{k=0}^\infty \binom{\alpha}{m+k}\binom{\beta}{k}z^k .
\label{FeqSum3}
\end{align}
In the last step we have again used \eqref{bc-}. For $z=1$ we can use \cite[15.2.1]{NIST} 
\begin{align}
{}_2F_1(a,b;c;1)=\frac{\Gamma \left( c \right) \Gamma \left( c-a-b \right)}{\Gamma \left( c-a \right) \Gamma \left( c-b \right)} , \qquad \text{if } \Re(c-a-b)>0,
\label{2F1forz1}
\end{align}
on the left-hand side. If $a$ or $b$ are non negative integers, \eqref{2F1forz1} holds without the restriction $ \Re(c-a-b)>0$, see \cite[15.4.24]{NIST}. If $\Re \left(\alpha+\beta+1\right)>0$ or if $\alpha $ or $ \beta$ is a non negative integer, we get
\begin{align}
\binom{\alpha}{m} {}_2F_1(-\alpha+m,-\beta;m+1;1)= \frac{\Gamma \left( \alpha+\beta+1 \right)}{ \Gamma \left( \alpha-m+1 \right) \Gamma \left(\beta+m+1 \right)}.
\end{align}
This yields the left-hand side of the claim \eqref{ApendixA}. 
\end{proof}

\section{Andr\'ei\'ef Identity and Standard Fisher-Hartwig Integral}\label{appB}

In this appendix we briefly present the Fisher-Hartwig integral without external fields, and show how it is obtained using Andr\'ei\'ef's integral identity, given in 
Lemma~\ref{LemmaAii} below. The reason is twofold, we can then compare with \eqref{1FH} in the limit $A=D=\mathbf{1}_N$ as a consistency check. Second, this provides an  example how such unitary group integrals are determined when they do not depend on fixed external matrices, as mentioned in the introduction.

We first quote the integral identity named after Andr\'ei\'ef, \cite{An}.
\begin{lemma}[Andr\'eief's Integral Identity]
\label{LemmaAii}
Let $\phi_j(x)$ and $\psi_j(x)$ for $j=1,\dots,N$ be integrable functions, with respect to some measure $d\mu$, then the following formula holds
\begin{align}
\prod_{l=1}^N\int d \mu(x_l) \underset{1\leq j,k\leq N}{\det} \left[\phi_j(x_k) \right]\underset{1\leq j,k\leq N}{\det}\left[\psi_j(x_k) \right] = N!\detjk \left[ \int d \mu(x) \phi_j(x) \psi_k(x) \right].
\label{Andreief}
\end{align}
\end{lemma}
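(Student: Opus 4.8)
The plan is to prove the identity along the classical route: expand both determinants by the Leibniz formula and exploit the product structure of the measure. First I would write each determinant as a signed sum over the symmetric group $S_N$,
\begin{align}
\detjk \left[\phi_j(x_k)\right] = \sum_{\sigma\in S_N}\sign(\sigma)\prod_{k=1}^N\phi_{\sigma(k)}(x_k),
\end{align}
and analogously for $\detjk\left[\psi_j(x_k)\right]$ with an independent permutation $\tau$. Multiplying the two expansions and substituting them into the left-hand side of \eqref{Andreief} produces a double sum over $(\sigma,\tau)\in S_N\times S_N$, with the $N$-fold integral acting on the product $\prod_{k=1}^N\phi_{\sigma(k)}(x_k)\psi_{\tau(k)}(x_k)$.

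The key observation is that, since $d\mu(x_1)\cdots d\mu(x_N)$ is a product measure and the integrand factorises across the variables $x_k$, the multiple integral splits into a product of single integrals. Writing $M_{ab} := \int d\mu(x)\,\phi_a(x)\psi_b(x)$, the left-hand side becomes
\begin{align}
\sum_{\sigma,\tau\in S_N}\sign(\sigma)\sign(\tau)\prod_{k=1}^N M_{\sigma(k),\tau(k)}.
\end{align}
I would then reparametrise the double sum by fixing $\sigma$ and setting $\rho := \tau\circ\sigma^{-1}$, so that $\tau=\rho\circ\sigma$ and $\sign(\sigma)\sign(\tau)=\sign(\rho)$. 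Relabelling the product index $a=\sigma(k)$, which is a bijection of $\{1,\dots,N\}$ as $k$ runs over all values, turns the product into $\prod_{a=1}^N M_{a,\rho(a)}$, which is manifestly independent of $\sigma$.

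Consequently the sum over $\sigma$ contributes only a plain factor $\abs{S_N}=N!$, while the remaining sum over $\rho$ reconstitutes a determinant,
\begin{align}
\sum_{\rho\in S_N}\sign(\rho)\prod_{a=1}^N M_{a,\rho(a)} = \det\left[M_{jk}\right]_{j,k=1}^N,
\end{align}
which is exactly $\detjk\left[\int d\mu(x)\phi_j(x)\psi_k(x)\right]$, yielding the claimed right-hand side $N!\,\det[M_{jk}]$. The only point requiring genuine care is the interchange of the finite sums over permutations with the integration together with the factorisation of the multiple integral; this is legitimate under the stated hypothesis that every $\phi_j$ and $\psi_j$ is $\mu$-integrable, so that each entry $M_{jk}$ is finite and Fubini's theorem applies termwise to the finitely many summands. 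The reindexing $a=\sigma(k)$ is the real engine of the argument and the step most worth spelling out explicitly.
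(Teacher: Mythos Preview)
Your argument is correct and is exactly the route the paper indicates: it merely states that ``the proof is straight forward using Leibniz' formula for determinants'' without spelling out the details, and your Leibniz expansion plus the substitution $\rho=\tau\circ\sigma^{-1}$ is precisely that computation. One small caveat worth noting is that integrability of each $\phi_j$ and $\psi_k$ alone does not literally guarantee $\phi_j\psi_k\in L^1(\mu)$, but the paper is equally informal on this point.
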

The proof is straight forward using Leibniz' formula for determinants. This lemma is frequently used in random matrix theory.

The standard Fisher-Hartwig integral can be immediately computed using this lemma. Without external matrices, i.e. when $A=D=\mathbf{1}_N$ \eqref{1FH} reads
\begin{eqnarray}
Z_{\alpha,\beta}^{\rm FH}&:=& \int  dU_N \det \Big[1+ U\Big]^\alpha\det \left[1+ U^\dagger\right]^\beta
\label{FHdef}\\
&=& c_N^{''}\prod_{l=1}^N\int_0^{2\pi}d\theta_l (1+t_l)^\alpha(1+t_l^{-1})^\beta 
\detjk\left[t_j^{k-1}\right]\detjk\left[t_j^{-k+1}\right] \nonumber\\
&=& c_N^{''}N!\detjk\left[ I_{\alpha,\beta}^{\rm FH}(j-k)\right],
\nonumber
\end{eqnarray}
where for the existence of the integral we have to require  $\Re(\alpha+\beta)>-1$, and we defined 
\begin{align}
I_{\alpha,\beta}^{\rm FH}(m):=\int_0^{2\pi}d\theta \left(1+e^{i\theta}\right)^\alpha\left(1+e^{-i\theta}\right)^\beta\exp[i(j-k)\theta].
\label{IFHdef}
\end{align}
We have used that the Jacobian of the transformation to the eigenvalues $t_j=e^{i\theta_j}$, $j=1,\ldots,N$, of the unitary matrix $U\in U(N)$  is known to be given by the modulus square of the Vandermonde determinant \eqref{Vdet}, $|\Delta_N(t)|^2$, times the constant $c_N^{''}=1/(N!(2\pi)^N)$ (recall that our convention is $\int dU_N=1$).

For complex $\alpha$ and $\beta$ the factors $(1+t_l)^\alpha(1+t_l^{-1})^\beta$ touch the branch cuts beginning at $t=-1$. However, this is not a problem in the 
expansion of the binomial series \eqref{biS}, which is absolutely convergent on the closed unit disk in the case that $\Re (\alpha ), \Re (\beta) >0$ ,
\begin{align}
  \left(1+ t \right)^\alpha \left(1+ t^{-1} \right)^\beta t^{m-1} = \sum_{n, \ell=0}^\infty \binom{\alpha}{n} \binom{\beta}{\ell} t^{n-\ell+m-1 } .
\end{align}
The integral \eqref{IFHdef} can be integrated term wise, mapping it to an integral over the unit circle $S_1= \{ z \in \Cset : |z|=1 \} $
\begin{align}
  I_{\alpha , \beta}^{\rm FH}(m) = \oint_{S_1} \frac{d t}{2 \iunit \pi} \left(1+ t \right)^\alpha \left(1+ t^{-1} \right)^\beta t^{m-1} \nonumber
  = \sum_{\ell= \max (0, m)}^\infty \binom{\alpha}{\ell -m} \binom{\beta}{\ell},
\end{align}
and applying Cauchy's theorem. Here, we could deform the circle $S_1$ such that it does not contain the endpoints of the cuts, without changing the result. 
The condition on the maximum can be removed to start the sum at $\ell=0$, since for $\Re (\alpha ) > 0$ and $\Re (\beta ) > 0$ the terms with $\ell-m<0$ vanish. Together with \eqref{ApendixA} from Lemma~\ref{ApendixATheorem} above, the remaining sum simplifies, and we obtain the following lemma, after continuing in $\alpha$ and $\beta$ to $\Re(\alpha+\beta)>-1$.
\begin{lemma}[Fisher-Hartwig Integral]\label{LaFH} For $\Re(\alpha+\beta)>-1$,  \eqref{FHdef} reads
\begin{align}
Z_{\alpha,\beta}^{\rm FH}&=\frac{1}{(2\pi)^N}\detjk\left[\binom{\alpha + \beta}{\beta -j +k} \right] 
\label{sFHresult}\\
  &=\frac{G\left(\alpha+\beta+N+1 \right)G(\alpha+1)G\left(N+1 \right) G(\beta+1)}{(2\pi)^NG(\alpha+\beta+1)G\left(\alpha+N+1 \right)G\left(\beta+N+1 \right)}.
  \label{ZFHSolution2}
\end{align}
\end{lemma}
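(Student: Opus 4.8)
The first equality \eqref{sFHresult} is already in hand: the paragraph preceding the lemma reduces $Z_{\alpha,\beta}^{\rm FH}$ through Andr\'ei\'ef's identity \eqref{Andreief} to $\frac{1}{(2\pi)^N}\detjk[I_{\alpha,\beta}^{\rm FH}(j-k)]$, and the Fourier coefficient $I_{\alpha,\beta}^{\rm FH}(m)=\sum_{\ell\ge0}\binom{\alpha}{\ell-m}\binom{\beta}{\ell}$ collapses to $\binom{\alpha+\beta}{\beta-m}$ by the convolution identity \eqref{ApendixA}. Setting $m=j-k$ gives the matrix entry $\binom{\alpha+\beta}{\beta-j+k}$. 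It therefore remains to evaluate this Toeplitz determinant of binomial coefficients in closed form and to rewrite the result through the Barnes $G$-function, which is the content of \eqref{ZFHSolution2}. I would carry out the evaluation first in the region $\Re(\alpha),\Re(\beta)>0$, where the underlying binomial series converge absolutely, and recover the full range $\Re(\alpha+\beta)>-1$ at the end by analytic continuation.

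The heart of the proof is the determinant evaluation
\[
\detjk\Big[\binom{\alpha+\beta}{\beta-j+k}\Big]=\prod_{j=1}^{N}\frac{(j-1)!\,\Gamma(\alpha+\beta+j)}{\Gamma(\alpha+j)\,\Gamma(\beta+j)}.
\]
This is the classical value of a Toeplitz determinant with a single (pure) Fisher--Hartwig singularity, the symbol being $(1+t)^\alpha(1+t^{-1})^\beta=t^{-\beta}(1+t)^{\alpha+\beta}$ with its only singular point at $t=-1$. For a self-contained derivation I would either recognise, after diagonalising $U$, the resulting eigenvalue integral as a Morris (circular Selberg) integral and invoke its known evaluation, cf.~\cite{DIK}, or evaluate the binomial determinant directly: for $\alpha,\beta\in\Nset_0$ it counts families of non-intersecting lattice paths, so the Lindstr\"om--Gessel--Viennot lemma delivers the product, and since both sides are ratios of Gamma functions the identity extends to all $\alpha,\beta$ by analytic continuation. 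Writing each entry through \eqref{binomDef} as $\binom{\alpha+\beta}{\beta-j+k}=\Gamma(\alpha+\beta+1)/[\Gamma(\beta-j+k+1)\Gamma(\alpha+j-k+1)]$ exhibits the corner minors of this Toeplitz matrix as again binomial Toeplitz determinants with shifted parameters, so a Desnanot--Jacobi condensation argument provides an independent check, with base case $D_1=\binom{\alpha+\beta}{\beta}$.

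With the product in hand, the passage to \eqref{ZFHSolution2} is bookkeeping with the functional equation $G(z+1)=\Gamma(z)G(z)$. Using $\prod_{j=0}^{N-1}\Gamma(z+j)=G(z+N)/G(z)$ for $z=\alpha+\beta+1,\alpha+1,\beta+1$ together with $\prod_{j=0}^{N-1}j!=G(N+1)$, the product $\prod_{j=1}^{N}(j-1)!\,\Gamma(\alpha+\beta+j)/[\Gamma(\alpha+j)\Gamma(\beta+j)]$ becomes exactly $G(\alpha+\beta+N+1)G(\alpha+1)G(N+1)G(\beta+1)/[G(\alpha+\beta+1)G(\alpha+N+1)G(\beta+N+1)]$, which is \eqref{ZFHSolution2} up to the prefactor $1/(2\pi)^N$. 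Finally, both \eqref{sFHresult} and \eqref{ZFHSolution2} are meromorphic in $(\alpha,\beta)$ and coincide on the open set $\Re(\alpha),\Re(\beta)>0$; hence they agree on their common domain of analyticity, in particular on $\Re(\alpha+\beta)>-1$ where the left-hand integral \eqref{FHdef} is convergent.

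The only genuinely non-routine step is the closed-form evaluation of the binomial Toeplitz determinant; everything around it (the Andr\'ei\'ef reduction, the Barnes-$G$ rewriting and the continuation) is mechanical. Within that step the main technical care is the treatment of the individual Gamma factors $\Gamma(\alpha+j)$ and $\Gamma(\beta+j)$ when $\alpha$ or $\beta$ is a non-positive integer: their poles are spurious and must be read in the l'H\^opital sense, exactly as for the binomial coefficients in \eqref{binomDef} and for the dimension formula \eqref{DrSub2}, so that both sides remain finite and the identity persists throughout.
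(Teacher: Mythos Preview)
Your treatment of \eqref{sFHresult} matches the paper exactly: the text immediately preceding the lemma already carries out the Andr\'ei\'ef reduction and the identification $I_{\alpha,\beta}^{\rm FH}(m)=\binom{\alpha+\beta}{\beta-m}$ via Lemma~\ref{ApendixATheorem}, and you correctly recognise this. For the second equality \eqref{ZFHSolution2}, the paper does not give a proof at all; it simply states that the result is well known and cites B\"ottcher--Silbermann \cite{BS} for the Barnes-$G$ form. Your sketch therefore goes beyond the paper: you propose concrete routes (Morris/Selberg evaluation, LGV for integer parameters plus analytic continuation, or Desnanot--Jacobi condensation) to obtain the product $\prod_{j=1}^N (j-1)!\,\Gamma(\alpha+\beta+j)/[\Gamma(\alpha+j)\Gamma(\beta+j)]$, followed by the straightforward $G(z+1)=\Gamma(z)G(z)$ bookkeeping. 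All of these are valid and standard, and the conversion to Barnes~$G$ is correct as you write it. In short, your argument is sound and, on the closed-form evaluation, more explicit than the paper's own account.
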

This result is well known, see e.g. \cite{BS}, including its expression in terms of the Barnes $G$-function defined in  \cite[\S 5.17]{NIST}. 
\end{appendix}


\end{document}